\documentclass[preprint, 12pt]{elsarticle}
\journal{elsevier}
\usepackage{xr}
\usepackage{array,bm,amsmath,amssymb,mathrsfs,epsfig,epstopdf}
\usepackage{amsthm}
\usepackage{setspace} 
\usepackage{CJK}
\usepackage{color}
\usepackage{lscape}
\usepackage{verbatim}
\usepackage{enumerate}
\usepackage{booktabs}
\usepackage{threeparttable}
\usepackage{multicol} 
\usepackage{longtable}
\usepackage{multirow}
\usepackage{bm}
\usepackage{bbm}
\usepackage{cases}
\usepackage{caption}
\usepackage{natbib}
\usepackage{algorithm}
\usepackage{algorithmic}
\usepackage{geometry}
\usepackage{framed}
\usepackage[pdftex,bookmarksnumbered,bookmarksopen,colorlinks, citecolor=blue,linkcolor=blue,urlcolor=blue]{hyperref}
\usepackage{graphicx}
\usepackage{amsmath}
\usepackage{amsfonts}
\usepackage{float}
\usepackage{ragged2e}
\usepackage[figuresright]{rotating}
\usepackage{mathtools}
\usepackage{pifont}
\usepackage{indentfirst}  
\usepackage{listings}
\definecolor{codegreen}{rgb}{0,0.6,0}
\definecolor{codegray}{rgb}{0.5,0.5,0.5}
\definecolor{codepurple}{rgb}{0.58,0,0.82}
\definecolor{backcolour}{rgb}{0.95,0.95,0.92}
\lstdefinestyle{mystyle}{
backgroundcolor=\color{backcolour},   
commentstyle=\color{codegreen},
keywordstyle=\color{magenta},
numberstyle=\tiny\color{codegray},
stringstyle=\color{codepurple},
basicstyle=\ttfamily\footnotesize,
breakatwhitespace=false,         
breaklines=true,                 
captionpos=b,                    
keepspaces=true,                 
numbers=left,                    
numbersep=5pt,                  
showspaces=false,                
showstringspaces=false,
showtabs=false,                  
tabsize=2
}

\newcommand\gA{{\mathcal{A}}}

\newcommand\gS{{\mathcal{S}}}

\newcommand\E{\mathbb{E}}

\newcommand\PP{\mathbb{P}}

\newcommand\Var{\mathrm{Var}}

\DeclareMathOperator*{\argmin}{arg\,min}

\DeclareMathOperator{\sign}{sign}

\newtheorem{theorem}{Theorem}[section]
\newtheorem{lemma}{Lemma}[section]

\newtheorem{condition}{Condition}
\begin{document}
\title{\bf Covariance Matrix Estimation for High-Dimensional Interval-Valued Data with Positive Definiteness}

\author[1,2]{Wan Tian}
\ead{wantian61@foxmail.com}

\author[3]{Wenhao Cui}
\ead{wenhaocui1992@buaa.edu.cn}

\author[4]{Rui Zhang}
\ead{zhrui1993@gmail.com}

\author[5,6]{Bingyi Jing}
\ead{bingyijing@cuhk.edu.cn}

\author[7]{Yang Liu}
\ead{mseliuyang@ustc.edu.cn}

\author[8,9,10]{Yijie Peng\corref{cor1}}
\ead{pengyijie@pku.edu.cn}

\cortext[cor1]{Corresponding author}

\address[1]{Advanced Institute of Information Technology, Peking University}
\address[2]{Wangxuan Institute of Computer Technology, Peking University, China, 100871}
\address[3]{School of Economics and Management, Beihang University, Beijing 100191, China}
\address[4]{Center for Applied Statistics and School of Statistics,Renmin University of China, Beijing, China, 100872}
\address[5]{Department of Statistics and Data Science, Southern University of Science and Technology, Shenzhen, China, 518055}

\address[6]{School of Artificial Intelligence, The Chinese University of Hong Kong, Shenzhen, China, 518172 }

\address[7]{School of Public Affairs, University of Science and Technology of China, Hefei, China, 230026}
\address[8]{PKU-Wuhan Institute for Artificial Intelligence}
\address[9]{Xiangjiang Laboratory, Changsha 410000, China}
\address[10]{Guanghua School of Management, Peking University, Beijing,  China, 100871}

\begin{abstract}
In the realm of high-dimensional data analysis, the estimation of covariance matrices is a fundamental task, and this holds true for interval-valued data as well. However, there is no unified definition for the covariance matrix of interval-valued data, let alone established estimation methods in high-dimensional settings. This paper presents a novel approach to estimating covariance matrices for high-dimensional interval-valued data while ensuring positive definiteness. We begin by assuming that the upper and lower bounds of interval-valued variables share the same dependency structure. Based on this assumption, we extend the classical soft-thresholding covariance matrix estimator to the interval-valued scenario, referred to as the Interval-valued Soft-Thresholding (IST) estimator. Subsequently, to ensure the positive definiteness of the estimator, we impose a positive definiteness constraint on the IST estimator. We derive an alternating direction method to solve the proposed problem and establish its convergence. Under some very mild conditions, we develop a non-asymptotic statistical theory for the proposed estimator. Simulation studies and applications to high-frequency financial data from the CSI 300 Index demonstrated the effectiveness of the proposed estimator.

\noindent
\emph{Keywords:} soft-thresholding; interval-valued data; sparsity; positive definiteness.
\end{abstract}

\maketitle
\section{Introduction}
In recent years, the modeling and analysis of interval-valued data have garnered increasing attention in the fields of econometrics and statistics \citep{billard2003statistics, sun2018threshold, le2012symbolic, sun2022model, tian2024minimum, gonzalez2013constrained}. Generally, there are three primary reasons for the generation of interval-valued data \citep{di2021tensor}. Firstly, data aggregation: compressing a large number of observations into intervals reduces data volume and enhances modeling efficiency. Secondly, privacy protection: replacing precise values with interval values. Lastly, uncertain observations: real-world measurements may be subject to inaccuracies due to environmental factors, limitations in sensory devices, or inconsistencies across data sources. Compared to point-valued data, modeling interval-valued data offers two distinct advantages \citep{han2016vector}. Firstly, interval-valued data inherently contains richer information regarding both central level and variability, thereby enabling more efficient statistical inference. Secondly, while specific disturbances or noise may have catastrophic effects on point-valued data modeling, these issues can be effectively mitigated through interval-valued data modeling approaches \citep{chou2005forecasting}.

Currently, research on interval-valued data can be categorized based on different methods of handling such data. The first approach represents intervals using representative points, commonly including upper bound, lower bound, center, and range, and then extends methods developed for point-valued data to the interval-valued setting. Major research topics in this category include regression \citep{diamond1990least, gil2002least, billard2002symbolic,neto2008centre, neto2010constrained} , classification \citep{palumbo1999non, rasson2000symbolic,qi2020interval}, Bayesian modeling \citep{hoff2009first, Zhang2019,Xu2022bayes}, and model averaging \citep{sun2022model}. The second approach focuses on selecting appropriate distance metrics between pairs of intervals, constructing corresponding estimators, and further developing statistical inference theories. These studies are primarily concentrated on time series modeling \citep{han2012autoregressive, han2016vector, sun2024nonparametric, sun2018threshold}.  We know that high-dimensional covariance matrix estimation is a fundamental task in statistical learning. However, for interval-valued data, not only is there no unified definition of the covariance matrix, but the estimation methods in high-dimensional settings are also highly limited. Below, we discuss the existing research on this topic.

In low-dimensional settings, to the best of our knowledge, there are four methods for estimating the covariance matrix of interval-valued data. \citet{cazes1997extension} proposed using the vertices and centers of intervals to estimate the covariance matrix. Clearly, both methods only utilize partial information of the intervals. \citet{wang2012cipca} assumed an infinite number of uniformly distributed points within the hypercube and defined the inner product operator for interval-valued variables, along with the corresponding covariance. \citet{bertrand2000descriptive} defined the empirical distribution function and density function for interval-valued data, providing the foundation for empirical mean and variance. Building on this work, \citet{billard2008sample} further extended the definition to the covariance of bivariate interval-valued variables. In addition, \citet{tian2024minimum} extended the minimum covariance determinant estimator to the interval-valued setting, thereby robustifying the covariance matrix proposed by \citet{billard2008sample}. They also proved the high breakdown point property of the estimator. 

In high-dimensional settings, research on matrix estimation for interval-valued data is extremely limited. The first significant work might be \citet{tian2024minimum}, which builds on the covariance matrix proposed by \citet{billard2008sample} and then introduces both regularization-based and projection-based estimators, extending them to high-dimensional scenarios to ensure positive definiteness. However, these two estimators are only applicable to moderate-dimensional cases and do not incorporate sparsity. There is also some work on high-dimensional sparse precision matrix estimation. \citet{wu2024idgm} first used bivariate point values to represent intervals, and then constructed an estimator based on graphical lasso \citep{friedman2008sparse}. This leads to the correlation between pairs of interval-valued variables being measured not by scalars, but by a block matrix. This not only doubles the dimensionality of the estimated precision matrix, but also lacks interpretability. Thus, it is evident that an effective estimation method for the covariance matrix of high-dimensional interval-valued data is currently unavailable. 


In this paper, we propose a novel framework for estimating the precision matrix of high-dimensional interval-valued data. Specifically, we first assume that the upper and lower bounds of interval-valued variables share the same dependency structure. Then, we extend the classic method of soft-thresholding estimator for high-dimensional point-valued data covariance matrix estimation to the interval-valued setting, which we refer to as the Interval-valued Soft-Thresholding (IST) estimator. To ensure that the estimator is positive definite, we impose a positive definite constraint on the IST estimator. On the optimization front, we derive an efficient alternating direction method to solve the problem and prove the convergence of the algorithm. Under some mild conditions, we develop non-asymptotic statistical theory for the proposed estimator. In simulation studies, we validate the effectiveness of the estimator across various data-generating processes (DGPs) and covariance matrix structures, while also analyzing the challenges associated with estimating matrices under different structural assumptions.

Empirically, we apply the proposed methodology to the estimation of covariance matrix using high-frequency financial data. The increasing availability of high-frequency financial data has spurred substantial interest in developing new methods for estimation and inference. High-frequency data refers to time series collected at an extremely fine scale, such as tick-by-tick market data, where each individual transaction is recorded as a ``tick”. Because of the large number of transactions, high-frequency datasets typically contain a vast amount of information. This abundance of observations allows for high statistical precision, which has motivated the development of a variety of specialized methods for analyzing such data. 

Substantial effort has been devoted to the estimation of return volatility, which plays a central role in asset pricing, risk management, hedging strategies, portfolio allocation, and forecasting. Volatility quantifies the dispersion of returns and serves as a key indicator of the risk associated with a financial asset. Various methods have been proposed for the estimation of return volatility matrix, which include the two-scale realized volatility estimator \citep{zhang2005tale}, the multi-scale realized volatility estimator \citep{zhang2006efficient}, the realized kernel \citep{barndorff2008designing}, the pre-averaging approach \citep{jacod2009microstructure}, and the quasi-maximum likelihood estimator \citep{xiu2010quasi}. 

However, existing methods often suffer from several shortcomings that limit their empirical effectiveness. First, high-frequency financial data, particularly tick-by-tick data, are difficult to obtain and typically come at a cost, making accessibility a major challenge. This issue is further compounded by the institutional evolution of equity markets: the proliferation of competing electronic trading networks increases the likelihood of slight timing discrepancies in reporting as well as other forms of pricing errors. Moreover, most existing approaches rely exclusively on return data, which are constructed using only the opening and closing prices within short intervals (e.g., five minutes). As a result, potentially valuable information, such as the highest and lowest prices observed within each interval, remains unused. 

The second challenge is that the estimation of high-dimensional volatility matrices using high-frequency financial data remains relatively underexplored. Most of the existing methods discussed above are developed for low-dimensional settings, with a few notable exceptions such as \cite{2012Vast,ait2017using,kong2018}. To the best of our knowledge, however, no existing work has utilized interval-valued data for the estimation of high-dimensional volatility matrices. The method proposed in this study therefore complements the current literature by explicitly incorporating the rich information embedded in high-frequency financial data.

In addition, candlestick data, which record the highest, lowest, opening, and closing prices over intervals such as five minutes, are widely available and generally accessible to researchers at a lower cost, and in some cases free of charge. This type of data naturally aligns with the framework considered in this study. By leveraging interval-valued high-frequency financial data, we empirically demonstrate the effectiveness of the proposed method. Our analysis shows that incorporating the full range of observed prices allows the model to capture richer patterns, which leads to economically meaningful conclusions, highlighting how the additional information embedded in the data range can be systematically exploited to improve inference and prediction.

The structure of the paper is as follows. Section \ref{sec2} introduces the proposed IST estimator and presents its optimization algorithm. Section \ref{sec3} proves the convergence of the algorithm and develops the non-asymptotic statistical theory for the IST estimator. Sections \ref{sec4} and \ref{sec5} present simulation studies and real data applications to demonstrate the effectiveness of the proposed estimator. Section \ref{sec6} concludes the paper. All the theoretical proofs are provided in \ref{appendix}.

\section{Methodology}  \label{sec2}
In this section, we introduce the proposed IST estimator and derive the alternating direction method to solve the optimization problem.

Let \(Y = (Y_{ij})_{1\leq i\leq n, 1\leq j\leq p} = (Y_{(1)}, Y_{(2)}, \cdots, Y_{(p)}) = (Y_1,Y_2,\cdots, Y_n)^\top\) be the matrix of interval-valued data observations, where \(Y_{(j)} = (Y_{1j}, Y_{2j}, \cdots, Y_{nj})^\top\) represents the \(j\)-th interval-valued variable with \(n\) observations, and \(Y_i = (Y_{i1},Y_{i2},\cdots, Y_{ip})^\top\) denotes a single observation of \(p\) interval-valued variables. Each interval-valued observation unit \(Y_{ij} = [Y^l_{ij}, Y^u_{ij}]\) consists of the upper bound \(Y^u_{ij}\) and lower bound \(Y^l_{ij}\). Our objective is to estimate the covariance matrix \(\Sigma = (\sigma_{ij})_{1\leq i,j\leq p}\) of the \(p\) interval-valued variables based on the observation matrix \(Y\), in order to capture the dependencies among the variables.

Without loss of generality, we assume that the upper and lower bounds of the \( p \) interval-valued variables share the same dependency structure, and both can be described by \(\Sigma\). In this case, we can directly extend the soft-thresholding estimator to the interval-valued setting, which leads to the following optimization problem:
\begin{equation} \label{directsoft}
\argmin_\Sigma \frac{1}{2}\lVert \Sigma - S^l\rVert^2_F + \frac{1}{2}\lVert \Sigma - S^u\rVert^2_F + \lambda \lVert \Sigma\rVert,
\end{equation}
where \( \|\cdot\|_F \) denotes the Frobenius norm of the matrix, and \( \lVert\Sigma \rVert = \sum_{i\neq j} |\sigma_{ij}|\). \(S^l = (s^l_{ij})_{1\leq i,j\leq p}\) and \(S^u = (s^u_{ij})_{1\leq i,j\leq p}\) represent the empirical covariance matrices corresponding to the upper bound observation matrix \(Y^u = (Y^u_{ij})_{1\leq i\leq n, 1\leq j\leq p}\) and the lower bound observation matrix \(Y^l = (Y^l_{ij})_{1\leq i\leq n, 1\leq j\leq p}\), respectively. 

However, the solution to optimization problem (\ref{directsoft}) can only ensure positive definiteness with high probability \citep{xue2012positive}; the resulting estimator is not guaranteed to be positive definite in practice. To address this issue, we impose a positive definiteness constraint on problem (\ref{directsoft}), which leads to the following formalized optimization problem:
\begin{equation}\label{IST}
\argmin_{\Sigma \succeq \epsilon I} \frac{1}{2}\lVert \Sigma - S^l\rVert^2_F + \frac{1}{2}\lVert \Sigma - S^u\rVert^2_F + \lambda \lVert \Sigma\rVert,
\end{equation}
where \(I\) represents the identity matrix, and \(\epsilon\) is an arbitrarily small positive number. It is important to note that \(\epsilon\) is not an turning parameter like \( \lambda \); it is merely introduced to ensure that the smallest eigenvalue of the estimator is no smaller than \(\epsilon\). We then discuss how to solve optimization problem (\ref{IST}) using the alternating direction method. We first introduce the consensus matrix \(\Gamma\) and transform the optimization objective (\ref{IST}) into the following equivalent form:
\begin{equation}\label{consensusIST}
\argmin_{\Sigma, \Gamma} \left\{
\frac{1}{2}\lVert \Sigma - S^l\rVert^2_F + \frac{1}{2}\lVert \Sigma - S^u\rVert^2_F + \lambda \lVert \Sigma\rVert: \Sigma = \Gamma, \Gamma\succeq \epsilon I.
\right\}
\end{equation}

The augmented Lagrangian form of the optimization problem (\ref{consensusIST}) is given by
\[
\ell(\Sigma, \Gamma, \Lambda) = \frac{1}{2}\lVert \Sigma - S^l\rVert^2_F + \frac{1}{2}\lVert \Sigma - S^u\rVert^2_F + \lambda \lVert \Sigma\rVert - \langle \Lambda, \Sigma - \Gamma \rangle + \frac{1}{2\beta} \lVert \Sigma - \Gamma\rVert^2_F,
\]
where \(\Lambda\) is the Lagrange multiplier matrix, and \(\beta\) is the penalty parameter. Given the results \(\Sigma^{(q)}\), \(\Gamma^{(q)}\), and \(\Lambda^{(q)}\) from the \(q\)-th iteration, the \((q+1)\)-th iteration consists of the following three steps:
\begin{align}
\Sigma \text{ step: } & \Sigma^{(q+1)} = \argmin_{\Sigma} \ell(\Sigma, \Gamma^{(q)}, \Lambda^{(q)}), \label{Sigmastep} \\
\Gamma \text{ step: } & \Gamma^{(q+1)} = \argmin_{\Gamma\succeq \epsilon I} \ell(\Sigma^{(q+1)}, \Gamma, \Lambda^{(q)}), \label{Gammastep} \\
\Lambda \text{ step: } & \Lambda^{(q+1)} = \argmin_{\Lambda} \ell(\Sigma^{(q+1)}, \Gamma^{(q+1)}, \Lambda). \label{Lambdastep}
\end{align}

We now discuss these three steps individually and provide their closed-form solutions. For \(\Sigma \text{ step}\), we have
\[
\begin{aligned}
\Sigma^{(q+1)} &= \argmin_{\Sigma} \ell(\Sigma, \Gamma^{(q)}, \Lambda^{(q)})\\
& = \argmin_{\Sigma} \frac{1}{2}\lVert \Sigma - S^l\rVert^2_F + \frac{1}{2}\lVert \Sigma - S^u\rVert^2_F + \lambda \lVert \Sigma\rVert - \langle \Lambda^{(q)}, \Sigma - \Gamma^{(q)} \rangle + \frac{1}{2\beta} \lVert \Sigma - \Gamma^{(q)}\rVert^2_F\\
& =\argmin_{\Sigma} \frac{1}{2}\lVert \Sigma - S^l\rVert^2_F + \frac{1}{2}\lVert \Sigma - S^u\rVert^2_F + \lambda \lVert \Sigma\rVert - \langle \Lambda^{(q)}, \Sigma\rangle + \frac{1}{2\beta} \lVert \Sigma - \Gamma^{(q)}\rVert^2_F\\
& = \argmin_{\Sigma} \frac{1}{2} \bigg \| \Sigma - \frac{\beta(S^l + S^u + \Lambda^{(q)}) + \Gamma^{(q)}}{2\beta+1} \bigg \|^2_F+  \frac{\beta \lambda}{2\beta+1} \lVert \Sigma\rVert\\
& = \frac{1}{2\beta+1} \gS(\beta(S^l + S^u + \Lambda^{(q)}) + \Gamma^{(q)}, \beta \lambda),
\end{aligned}
\]
where \(\gS\) is an element-wise soft thresholding operator applied to a matrix. For example, for any matrix \(G = (g_{ij})_{1\leq i,j\leq p}\) and threshold \(\eta\), we have \(\gS(G, \eta) = (s(g_{ij}, \eta))_{1\leq i,j\leq p}\), whth
\[
s(g_{ij}, \eta) = \sign(g_{ij}) \max(|g_{ij}| - \eta, 0)\mathbb{I}(i\neq j) + g_{ij}\mathbb{I}(i=j),
\]
where \(\sign(\cdot)\) refers to the sign function, and \(\mathbb{I}(\cdot)\) refers to the indicator function. For \(\Gamma \text{ step}\), we have
\[
\begin{aligned}
\Gamma^{(q+1)} & = \argmin_{\Gamma\succeq \epsilon I} \ell(\Sigma^{(q+1)}, \Gamma, \Lambda^{(q)})\\
& = \argmin_{\Gamma\succeq \epsilon I} \frac{1}{2}\lVert \Sigma^{(q+1)} - S^l\rVert^2_F + \frac{1}{2}\lVert \Sigma^{(q+1)} - S^u\rVert^2_F + \lambda \lVert \Sigma^{(q+1)}\rVert \\
& - \langle \Lambda^{(q)}, \Sigma^{(q+1)} - \Gamma \rangle + \frac{1}{2\beta} \lVert \Sigma^{(q+1)} - \Gamma\rVert^2_F \\
& = \argmin_{\Gamma\succeq \epsilon I} \langle\Lambda^{(q)}, \Gamma \rangle + \frac{1}{2\beta} \lVert \Sigma^{(q+1)} - \Gamma\rVert^2_F\\
& = \argmin_{\Gamma\succeq \epsilon I} \lVert \Gamma - (\Sigma^{(q+1)} - \beta \Lambda^{(q)}) \rVert^2_F\\
& = (\Sigma^{(q+1)} - \beta \Lambda^{(q)})_+,
\end{aligned}
\]
where \((\cdot)_+\) refers to the operator that projects a matrix onto the convex cone \(\{\Gamma \succeq \epsilon I\}\). For example, for any matrix \(G\), whose eigenvalue decomposition is given by \(\sum_{ i=1}^{p} \lambda_i \nu^\top_i \nu_i\), where \(\lambda_1,\lambda_2,\cdots,\lambda_p\) are the eigenvalues corresponding to the eigenvectors \(\nu_1,\nu_2,\cdots,\nu_p\).The projection operation \((G)_+\) can be expressed as \(\sum_{ i=1}^{p} (\lambda_i \vee \epsilon) \nu^\top_i \nu_i\). For \(\Lambda \text{ step}\), we can easily have
\[
\begin{aligned}
\Lambda^{(q+1)} = \Lambda^{(q)} - \frac{1}{\beta}(\Sigma^{(q+1)} - \Gamma^{(q+1)}). 
\end{aligned}
\]

We summarize the three-step parameter update process in the following algorithm.d
\begin{algorithm}[H]
\caption{Alternating direction solver for IST estimator with positive definite constraint} \label{ADMMalgorithm}
\begin{algorithmic}
\REQUIRE ~~\\
Input \(\Gamma^{(0)}, \Lambda^{(0)}, \beta\).
\ENSURE ~~\\	
For the \((q+1)\)-th iteration\\
\qquad 1. Solve \(\Sigma^{(q+1)}=\frac{1}{2\beta+1} \gS(\beta(S^l + S^u + \Lambda^{(q)}) + \Gamma^{(q)}, \beta \lambda)\);\\
\qquad 2. Solve \(\Gamma^{(q+1)} = (\Sigma^{(q+1)} - \beta \Lambda^{(q)})_+\);\\
\qquad 3. Solve \(\Lambda^{(q+1)} = \Lambda^{(q)} - \frac{1}{\beta}(\Sigma^{(q+1)} - \Gamma^{(q+1)})\);\\
Repeat the above cycle till convergence;\\
\textbf{Output}: IST estimator.\\
\end{algorithmic}
\end{algorithm}

\section{Theoretical Properties} \label{sec3}
In this section, we discuss the convergence of Algorithm \ref{ADMMalgorithm} and establish the non-asymptotic statistical theory for the IST estimator under different tail conditions.

\subsection{Convergence Analysis of Algorithm \ref{ADMMalgorithm}}
Let \(\Sigma^* = (\sigma^*_{ij})_{1\leq i,j\leq p}\) and \(\Gamma^*\) be the optimal solutions of optimization problem (\ref{consensusIST}), and \(\Lambda^*\) be the corresponding optimal dual variable. Our goal is to prove that the sequence \((\Sigma^{(q)}, \Gamma^{(q)}, \Lambda^{(q)})\) generated by Algorithm \ref{ADMMalgorithm} converges to \((\Sigma^*, \Gamma^*, \Lambda^*)\). First, we define some necessary notations. Let matrix \(D \in \mathbb{R}^{2p\times 2p}\) be
\[
D = \left(
\begin{array}{cc}
\beta I & 0\\
0 & \frac{2}{\beta} I\\
\end{array}
\right),
\]
where \(I\in \mathbb{R}^{p\times p}\) is the identity matrix. Based on \(D\), we define the norm \(\lVert \cdot \rVert^2_D\) as \(\lVert V \rVert^2_D = \langle V, DV\rangle\), and the inner product \(\langle \cdot, \cdot \rangle_D\) as \(\langle V, U \rangle_D = \langle V, DU\rangle\), where \(V\) and \(U\) are appropriately dimensioned matrices. Before presenting the convergence result, we first have the following auxiliary lemma.

\begin{lemma}\label{auxiliarylemma}
Let \(U^* = (\Lambda^*, \Sigma^*)^\top, U^{(q)} = (\Lambda^{(q)}, \Sigma^{(q)})^\top\). Then, we have that the sequence \((\Sigma^{(q)}, \Gamma^{(q)}, \Lambda^{(q)})\) generated by Algorithm \ref{ADMMalgorithm} satisfies
\[
\lVert U^{(q)} - U^*\rVert^2_D - \lVert U^{(q+1)} - U^*\rVert^2_D \geq \lVert U^{(q)} - U^{(q+1)} \rVert^2_D.
\]
\end{lemma}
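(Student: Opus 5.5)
The plan is the standard contraction argument for two-block alternating direction methods, in the style of He and Yuan as used by \citet{xue2012positive}. It combines the first-order optimality conditions of the three update steps (\ref{Sigmastep})--(\ref{Lambdastep}) with the KKT conditions of (\ref{consensusIST}), and then applies a polarization identity in the $D$-inner product.

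\textbf{Step 1: optimality conditions.} Write $f(\Sigma)=\frac12\|\Sigma-S^l\|_F^2+\frac12\|\Sigma-S^u\|_F^2$, which is $2$-strongly convex.
\begin{itemize}
\item The $\Sigma$ step gives $0\in \nabla f(\Sigma^{(q+1)})+\lambda\partial\|\Sigma^{(q+1)}\|-\Lambda^{(q)}+\frac1\beta(\Sigma^{(q+1)}-\Gamma^{(q)})$. Substituting the multiplier update $\Lambda^{(q+1)}=\Lambda^{(q)}-\frac1\beta(\Sigma^{(q+1)}-\Gamma^{(q+1)})$, this becomes
\[
\Lambda^{(q+1)}-\tfrac1\beta(\Gamma^{(q+1)}-\Gamma^{(q)})\in \nabla f(\Sigma^{(q+1)})+\lambda\partial\|\Sigma^{(q+1)}\|.
\]
\item The $\Gamma$ step is a projection onto the convex set $\{\Gamma\succeq\epsilon I\}$. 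Its variational inequality, rewritten through the multiplier update, reads $\langle \Gamma-\Gamma^{(q+1)},\Lambda^{(q+1)}\rangle\ge 0$ for all feasible $\Gamma$.
\item The optimum satisfies $\Lambda^*\in\nabla f(\Sigma^*)+\lambda\partial\|\Sigma^*\|$, $\Sigma^*=\Gamma^*$, and $\langle\Gamma-\Gamma^*,\Lambda^*\rangle\ge0$ for all feasible $\Gamma$.
\end{itemize}

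\textbf{Step 2: monotonicity.} Strong monotonicity of $\nabla f+\lambda\partial\|\cdot\|$ yields
\[
\langle\Sigma^{(q+1)}-\Sigma^*,\ \Lambda^{(q+1)}-\Lambda^*-\tfrac1\beta(\Gamma^{(q+1)}-\Gamma^{(q)})\rangle\ \ge\ 2\|\Sigma^{(q+1)}-\Sigma^*\|_F^2 .
\]
Testing the two $\Gamma$ variational inequalities against each other's points and adding gives $\langle\Gamma^*-\Gamma^{(q+1)},\Lambda^{(q+1)}-\Lambda^*\rangle\ge0$. Adding this to the previous display, and using $\Sigma^{(q+1)}-\Gamma^{(q+1)}=\beta(\Lambda^{(q)}-\Lambda^{(q+1)})$ together with $\Sigma^*=\Gamma^*$, produces an inequality of the form
\[
\beta\langle \Lambda^{(q)}-\Lambda^{(q+1)},\Lambda^{(q+1)}-\Lambda^*\rangle+\tfrac1\beta\langle \Gamma^{(q)}-\Gamma^{(q+1)},\Sigma^{(q+1)}-\Sigma^*\rangle \ge 2\|\Sigma^{(q+1)}-\Sigma^*\|_F^2 .
\]
The cross term $\langle\Lambda^{(q)}-\Lambda^{(q+1)},\Gamma^{(q)}-\Gamma^{(q+1)}\rangle$ is handled by applying the $\Gamma$ variational inequality at iterations $q$ and $q+1$ and adding, which shows it is nonnegative.

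\textbf{Step 3: polarization.} Regroup the result as $\langle U^{(q)}-U^{(q+1)},U^{(q+1)}-U^*\rangle_D\ge0$. The identity
\[
\|a\|_D^2-\|b\|_D^2=\|a-b\|_D^2+2\langle a-b,b\rangle_D,
\]
applied with $a=U^{(q)}-U^*$ and $b=U^{(q+1)}-U^*$, then gives the claimed inequality directly.

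\textbf{Main obstacle.} The delicate step is matching the weights $\beta$ and $2/\beta$ in $D$ with the second block being $\Sigma$ rather than $\Gamma$. The natural quantity produced in Step 2 is in terms of $\Gamma^{(q)}-\Gamma^{(q+1)}$ with weight $1/\beta$. To convert it to $\Sigma$-differences, I would use $\Sigma^{(q+1)}=\Gamma^{(q+1)}+\beta(\Lambda^{(q)}-\Lambda^{(q+1)})$. The leftover mismatch terms would be absorbed by the strong-convexity surplus $2\|\Sigma^{(q+1)}-\Sigma^*\|_F^2$, which is presumably where the factor $2/\beta$ originates. If the bookkeeping does not close, the fallback is to prove the inequality with $\Gamma$ in place of $\Sigma$ in $U$, which is the standard form of this contraction and suffices for the subsequent convergence theorem.
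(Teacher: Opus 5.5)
Your Steps 1 and 2 are sound up to one sign, and they are more careful than the paper's argument. Because Algorithm~\ref{ADMMalgorithm} updates $\Sigma$ before $\Gamma$, the $\Sigma$-step condition really is $\Lambda^{(q+1)}-\tfrac1\beta(\Gamma^{(q+1)}-\Gamma^{(q)})\in\nabla f(\Sigma^{(q+1)})+\lambda\partial\lVert\Sigma^{(q+1)}\rVert$. The paper's (\ref{sigmaopt11})--(\ref{sigmaopt22}) drop the $\Gamma$-difference; that clean form only arises when the projection step comes first, as in \citet{xue2012positive}. The sign to fix is the cross term: monotonicity of the normal cone of $\{\Gamma\succeq\epsilon I\}$ gives $\langle\Lambda^{(q)}-\Lambda^{(q+1)},\Gamma^{(q)}-\Gamma^{(q+1)}\rangle\le 0$, not $\ge 0$. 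That is exactly the sign you need to discard it after substituting $\Sigma^{(q+1)}-\Sigma^*=\Gamma^{(q+1)}-\Gamma^*+\beta(\Lambda^{(q)}-\Lambda^{(q+1)})$. It uses the $\Gamma$-inequality at iteration $q$, so it is only available for $q\ge 1$, since the initial pair $(\Gamma^{(0)},\Lambda^{(0)})$ is arbitrary.

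The genuine gap is Step 3, and no bookkeeping with the surplus $2\lVert\Sigma^{(q+1)}-\Sigma^*\rVert_F^2$ can close it, because the statement with $U=(\Lambda,\Sigma)^\top$ is false for this algorithm. Take $S^l=S^u=10I$, $\beta=1$, $\epsilon<9$ and any $\lambda>0$. Then $\Sigma^*=\Gamma^*=10I$, and since $\Gamma^*\succ\epsilon I$, $\Lambda^*=0$. Starting from $\Gamma^{(0)}=9I$ and $\Lambda^{(0)}=I$, the algorithm gives $\Sigma^{(1)}=10I$, $\Gamma^{(1)}=9I$, $\Lambda^{(1)}=0$, so $U^{(1)}=U^*$. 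The next step gives $\Sigma^{(2)}=\Gamma^{(2)}=\tfrac{29}{3}I$ and $\Lambda^{(2)}=0$. At $q=1$ the left side of the lemma equals $-\tfrac{2p}{9}$ while the right side equals $\tfrac{2p}{9}$. Here also $(2\Sigma^{(2)}-S^l-S^u)_{ii}-\Lambda^{(2)}_{ii}=-\tfrac23\neq0$, which contradicts (\ref{sigmaopt22}). Because $U^{(1)}=U^*\neq U^{(2)}$, no choice of positive weights in $D$ can help: the iteration is driven by $(\Gamma^{(q)},\Lambda^{(q)})$, and $\Sigma^{(q)}$ carries no state.

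So your fallback is not a fallback but the correct lemma. With $W=(\Lambda,\Gamma)^\top$ and $H=\mathrm{diag}(\beta I,\tfrac1\beta I)$, your argument yields, for $q\ge1$,
\[
\lVert W^{(q)}-W^*\rVert_H^2-\lVert W^{(q+1)}-W^*\rVert_H^2\ \ge\ \lVert W^{(q)}-W^{(q+1)}\rVert_H^2+4\lVert\Sigma^{(q+1)}-\Sigma^*\rVert_F^2 .
\]
The surplus term is summable, so it even gives $\Sigma^{(q)}\to\Sigma^*$ directly. This is what Theorem~\ref{maintheorem} should be built on; its third bullet fails in the same example.

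The paper reaches the $(\Lambda,\Sigma)$ form only through the incorrect (\ref{sigmaopt11}). Its last step then asserts $2\lVert\Sigma^{(q+1)}-\Sigma^*\rVert_F^2-2\langle\Lambda^{(q+1)}-\Lambda^{(q)},\Sigma^{(q+1)}-\Sigma^{(q)}\rangle\ge0$ ``by monotonicity''. But monotonicity bounds that inner product from below, not from above.
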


In Lemma \ref{auxiliarylemma}, \( \| U^{(q)} - U^* \|_D^2 - \| U^{(q+1)} - U^* \|_D^2 \) represents the reduction in the distance between the iterate \( U^{(q)} \) at the \( q \)-th iteration and the optimal solution \( U^* \), compared to the distance at the \( (q+1) \)-th iteration. \( \| U^{(q)} - U^{(q+1)} \|_D^2 \) represents the ``jump'' size between two consecutive iterations, that is, the distance between two consecutive iterates. This indicates that, in each iteration, the reduction in the squared distance to the optimal solution \( U^* \) is greater than or equal to the squared distance between consecutive iterates, meaning that each step of the optimization process effectively brings the solution closer to the optimal one. Using Lemma \ref{auxiliarylemma}, we immediately obtain the following convergence result.

\begin{theorem} \label{maintheorem}
The sequence \( (\Sigma^{(q)}, \Gamma^{(q)}, \Lambda^{(q)}) \) generated by Algorithm \ref{ADMMalgorithm}, initialized from any point, satisfies 
\begin{itemize}
\item \( \| U^{(q)} - U^{(q+1)} \|_D \to 0 \),
\item \((U^{(q)})\) lies in a compact region,
\item \( \|U^{(q)} - U^* \|_D^2 \) is monotonically non-increasing,
\end{itemize}
and thus can converge to the optimal solution \( (\Sigma^*, \Gamma^*, \Lambda^*) \) of optimization problem (\ref{consensusIST}).
\end{theorem}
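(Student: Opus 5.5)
Taking Lemma \ref{auxiliarylemma} as given, the three listed properties follow from a telescoping argument. Limit identification then goes through the optimality conditions of the three subproblems. Writing the lemma as
\[
\lVert U^{(q)} - U^*\rVert^2_D - \lVert U^{(q+1)} - U^*\rVert^2_D \geq \lVert U^{(q)} - U^{(q+1)} \rVert^2_D \ge 0,
\]
the right-hand side is nonnegative because $D$ is positive definite (recall $\beta>0$). Hence $\lVert U^{(q)}-U^*\rVert_D^2$ is monotonically non-increasing, which is the third bullet. It follows that $\lVert U^{(q)}-U^*\rVert_D \le \lVert U^{(0)}-U^*\rVert_D$ for all $q$, so $(U^{(q)})$ stays in a closed $D$-ball around $U^*$. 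Since $\lVert\cdot\rVert_D$ is equivalent to the Frobenius norm, this ball is compact, giving the second bullet. Summing the lemma over $q=0,\dots,Q$ gives
\[
\sum_{q=0}^{Q}\lVert U^{(q)}-U^{(q+1)}\rVert_D^2 \le \lVert U^{(0)}-U^*\rVert_D^2 <\infty ,
\]
so the increments are square-summable and in particular $\lVert U^{(q)}-U^{(q+1)}\rVert_D\to 0$, the first bullet.

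For convergence to the optimum, I would first note that the $\Lambda$-update gives $\Sigma^{(q+1)}-\Gamma^{(q+1)} = \beta(\Lambda^{(q)}-\Lambda^{(q+1)})\to 0$. By compactness, extract a subsequence $U^{(q_k)}\to \bar U=(\bar\Lambda,\bar\Sigma)$. Then $\Gamma^{(q_k)}\to\bar\Sigma$, and the shifted sequences $U^{(q_k+1)}$ converge to the same limit. Next I would write the optimality conditions of the $\Sigma$ step and the $\Gamma$ step:
\[
0\in \Sigma^{(q+1)}-S^l+\Sigma^{(q+1)}-S^u+\lambda\,\partial\lVert\Sigma^{(q+1)}\rVert-\Lambda^{(q)}+\tfrac1\beta(\Sigma^{(q+1)}-\Gamma^{(q)}),
\]
\[
0\in \Lambda^{(q)}+\tfrac1\beta(\Gamma^{(q+1)}-\Sigma^{(q+1)})+N_{\{\Gamma\succeq\epsilon I\}}(\Gamma^{(q+1)}).
\]
The subdifferential of the off-diagonal $\ell_1$ norm and the normal cone to the closed convex set $\{\Gamma\succeq\epsilon I\}$ both have closed graphs. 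Passing to the limit along $q_k$ therefore shows that $(\bar\Sigma,\bar\Sigma,\bar\Lambda)$ satisfies the KKT conditions of (\ref{consensusIST}), so it is an optimal primal–dual triple. The objective is $1$-strongly convex in $\Sigma$, so $\Sigma^*$ is unique and $\bar\Sigma=\Sigma^*$.

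Finally, I would apply Lemma \ref{auxiliarylemma} with $U^*$ replaced by this particular optimal pair $\bar U$. This is legitimate because the lemma's proof only uses the KKT conditions at $U^*$. It shows that $\lVert U^{(q)}-\bar U\rVert_D$ is non-increasing. Since it tends to zero along $q_k$, it tends to zero along the whole sequence, and then $\Gamma^{(q)}\to\bar\Sigma=\Gamma^*$ as well.

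The main obstacle is this last identification step. The bullets themselves are immediate, but one must justify closedness of the subdifferential and normal-cone graphs under limits, and handle possible non-uniqueness of the dual variable $\Lambda^*$. I would address the dual issue exactly by re-invoking the lemma at the cluster point as above, rather than assuming $\Lambda^*$ is unique.
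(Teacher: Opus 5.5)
Your argument is correct given Lemma~\ref{auxiliarylemma} and follows the paper's route: bullets from the lemma, a convergent subsequence, and passage to the limit in the $\Sigma$- and $\Gamma$-step optimality conditions. It also repairs the paper's weak points: your telescoping sum replaces the paper's circular boundedness claim (``since the sequence is convergent''), and re-invoking the lemma at the cluster point gives whole-sequence convergence, which the paper never actually shows. Be aware, though, that the lemma you and the paper both rely on fails as stated for the $\Sigma$-first ordering: with $p=1$, $\beta=1$, $S^l=S^u=1$, $\epsilon<1$, $\Lambda^{(0)}=-1$, $\Gamma^{(0)}=2$, one gets $U^{(1)}=U^*=(0,1)$ but $\Sigma^{(2)}=4/3$, so the third bullet is false; the monotone quantity should instead be $\beta\lVert\Lambda^{(q)}-\Lambda^*\rVert_F^2+\beta^{-1}\lVert\Gamma^{(q)}-\Gamma^*\rVert_F^2$, and with it your telescoping-plus-cluster-point argument goes through essentially verbatim.
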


\subsection{Non-asymptotic analysis of the IST estimator}
In this section, we perform a non-asymptotic analysis of the IST estimator under exponential-tail and polynomial-tail conditions. Before doing so, we define some useful notation. Let \(\Sigma^0 = (\sigma^0_{ij})_{1\leq i,j\leq p}\) be the true covariance matrix corresponding to the observations \(Y = (Y_{ij})_{1\leq i\leq n, 1\leq j\leq p}\), and define the support set of \(\Sigma^0\) as \(\gA = \{(i,j): \sigma^0_{ij} \neq 0, i \neq j\}\), with cardinality \(\alpha = |\gA|\), \(\gA^c \) is the complement of \(\gA\). Define \(\sigma_{\max}= \max_{ii} \sigma^0_{ii}\) as the maximum true variance in \(\Sigma^0\). Additionally, for matrix \(A = (a_{ij})_{1\leq i,j\leq p}\), let \(A_\gA = (a_{ij}\mathbb{I}((i,j)\in \gA))_{1\leq i,j\leq p}\), where \(\mathbb{I}(\cdot)\) is the indicator function. We first have the following theorem, which serves as a bridge for our non-asymptotic analysis.

\begin{theorem}\label{nasymptoticth1}
The Frobenius norm of the difference between the optimal solution \(\Sigma^*\) and the true covariance matrix \(\Sigma^0\) is bounded with high probability, i.e.,
\[
\PP\left(\lVert\Sigma^* - \Sigma^0 \rVert_F \leq 5\lambda (\alpha + p)^{\frac{1}{2}}\right) \geq 1 - \PP\left(\max_{i,j} (|s^l_{ij} - \sigma^0_{ij}| \vee |s^u_{ij} - \sigma^0_{ij}|) > \lambda\right).
\]
\end{theorem}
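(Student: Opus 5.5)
Our plan is to argue deterministically on the event $E=\{\max_{i,j}(|s^l_{ij}-\sigma^0_{ij}|\vee|s^u_{ij}-\sigma^0_{ij}|)\le\lambda\}$ and show that on $E$ we have $\lVert\Sigma^*-\Sigma^0\rVert_F\le 5\lambda(\alpha+p)^{1/2}$. The probability bound then follows at once. We will use the standard basic-inequality argument for penalized least squares, following \citet{xue2012positive}. We assume $\epsilon$ is small enough that $\Sigma^0\succeq\epsilon I$, so that $\Sigma^0$ is feasible for (\ref{IST}).

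\textbf{Step 1 (basic inequality).} Write $\Delta=\Sigma^*-\Sigma^0$ and $F(\Sigma)=\frac12\lVert\Sigma-S^l\rVert_F^2+\frac12\lVert\Sigma-S^u\rVert_F^2$. Since $\Sigma^*$ minimizes $F+\lambda\lVert\cdot\rVert$ over the feasible set and $\Sigma^0$ is feasible, we get $F(\Sigma^*)+\lambda\lVert\Sigma^*\rVert\le F(\Sigma^0)+\lambda\lVert\Sigma^0\rVert$. Expanding the quadratic gives
\[
F(\Sigma^*)-F(\Sigma^0)=\lVert\Delta\rVert_F^2-\langle\Delta,(S^l-\Sigma^0)+(S^u-\Sigma^0)\rangle .
\]
Hence
\[
\lVert\Delta\rVert_F^2\le \langle\Delta,W\rangle+\lambda(\lVert\Sigma^0\rVert-\lVert\Sigma^*\rVert),\qquad W=(S^l-\Sigma^0)+(S^u-\Sigma^0).
\]
On $E$ every entry of $W$ is bounded by $2\lambda$ in absolute value.

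\textbf{Step 2 (splitting the inner product).} We split the indices into three groups: the diagonal, the off-diagonal support $\gA$, and $\gA^c$ minus the diagonal. The diagonal is not penalized, and there we bound $|\langle\Delta_{\rm diag},W\rangle|\le 2\lambda\lVert\Delta_{\rm diag}\rVert_1$. For the penalty term, the triangle inequality together with $\sigma^0_{ij}=0$ off $\gA$ gives
\[
\lVert\Sigma^0\rVert-\lVert\Sigma^*\rVert\le\lVert\Delta_\gA\rVert_1-\lVert\Delta_{\gA^c,\rm off}\rVert_1 .
\]
The off-diagonal part of $\langle\Delta,W\rangle$ is at most $2\lambda(\lVert\Delta_\gA\rVert_1+\lVert\Delta_{\gA^c,\rm off}\rVert_1)$. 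Combining these,
\[
\lVert\Delta\rVert_F^2\le 3\lambda\lVert\Delta_\gA\rVert_1+2\lambda\lVert\Delta_{\rm diag}\rVert_1+\lambda\lVert\Delta_{\gA^c,\rm off}\rVert_1 .
\]

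\textbf{Step 3 (the $\gA^c$ term, the main obstacle).} The term on $\gA^c$ now carries a positive coefficient instead of being cancelled, because $W$ has entry bound $2\lambda$ while the penalty weight is only $\lambda$. The crude basic inequality therefore does not close. Our first attempt would be to use the first-order optimality (KKT) condition instead. Write $\bar S=(S^l+S^u)/2$. The condition reads $2(\Sigma^*-\bar S)+\lambda Z-\Lambda^*=0$, where $Z$ is a subgradient of $\lVert\cdot\rVert$ and $\Lambda^*$ is the multiplier for the cone constraint, with $\langle\Lambda^*,\Sigma^0-\Sigma^*\rangle\le0$ because $\Sigma^0$ is feasible. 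Taking the inner product with $\Delta$ gives
\[
2\lVert\Delta\rVert_F^2\le\langle 2(\bar S-\Sigma^0)-\lambda Z,\Delta\rangle .
\]
Here $2(\bar S-\Sigma^0)=W$ again has entries bounded by $2\lambda$. On $\gA^c$ off the diagonal the inner product $\langle Z,\Delta\rangle$ equals $\lVert\Delta_{\gA^c}\rVert_1$, so the $\gA^c$ contribution is bounded by $(2\lambda-\lambda)\lVert\Delta_{\gA^c}\rVert_1$, and we again fail to cancel it.

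The fix is to absorb this term rather than cancel it. Using $\lVert\Delta_{G}\rVert_1\le|G|^{1/2}\lVert\Delta_G\rVert_F$ for each group $G$, the whole right-hand side is bounded by $3\lambda\lVert\Delta\rVert_1$. This gives at best $\lVert\Delta\rVert_F\lesssim\lambda p$, not $\lambda(\alpha+p)^{1/2}$. I therefore expect the intended argument to work with the averaged matrix $\bar S$, whose entries satisfy $|\bar s_{ij}-\sigma^0_{ij}|\le\lambda$ on $E$. Then $2\lVert\Delta\rVert_F^2\le\langle 2(\bar S-\Sigma^0),\Delta\rangle-\lambda\langle Z,\Delta\rangle$, and rescaling the penalty so that the noise level exactly matches the threshold makes the $\gA^c$ coefficient nonpositive. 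I would try this normalization first. What remains is a bound of the form
\[
\lVert\Delta\rVert_F^2\le c\lambda\bigl(\lVert\Delta_\gA\rVert_1+\lVert\Delta_{\rm diag}\rVert_1\bigr)\le c\lambda(\alpha+p)^{1/2}\lVert\Delta\rVert_F ,
\]
by Cauchy--Schwarz on the at most $\alpha+p$ entries involved.

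\textbf{Step 4 (conclusion).} Dividing by $\lVert\Delta\rVert_F$ gives the rate. The constant $5$ should come out of the bookkeeping, for example $2+3$, accounting for the factor $2$ from the two bounds and the extra $\lambda$ from the subgradient. If the tight cancellation on $\gA^c$ fails, the fallback is to accept a slightly larger constant after redefining the event in terms of $\bar S$.
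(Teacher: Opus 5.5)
Your Step 3 never closes, and the remedies you suggest would not rescue the stated theorem. Passing to $\bar S=(S^l+S^u)/2$ changes nothing, because $2(\bar S-\Sigma^0)=W$. Rescaling the penalty, or redefining the event, proves a different statement. The root of the trouble is the identity
\[
\tfrac12\lVert\Sigma-S^l\rVert_F^2+\tfrac12\lVert\Sigma-S^u\rVert_F^2=\lVert\Sigma-\bar S\rVert_F^2+\text{const}.
\]
So the estimator soft-thresholds $\bar S$ at level $\lambda/2$, whereas on $E$ you only know $|\bar s_{ij}-\sigma^0_{ij}|\le\lambda$.

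This is a genuine obstruction, not a bookkeeping artifact. Take $\Sigma^0=I$ (so $\alpha=0$) and $S^l=S^u=(1-\lambda)I+\lambda\mathbf{1}\mathbf{1}^\top$ with $0<\lambda<1$. This matrix is positive definite, so it is a legitimate sample covariance for some data set with $n>p$. Moreover, $S^l=S^u$ is exactly what happens in DGP1 and DGP2. Then $E$ holds. The unconstrained minimizer $(1-\lambda/2)I+(\lambda/2)\mathbf{1}\mathbf{1}^\top$ has smallest eigenvalue $1-\lambda/2>1/2$, so it equals $\Sigma^*$ for $\epsilon\le 1/2$. Yet
\[
\lVert\Sigma^*-\Sigma^0\rVert_F=\tfrac{\lambda}{2}\sqrt{p(p-1)}>5\lambda\sqrt{p}\quad\text{whenever } p>101 .
\]
Thus the inclusion $E\subseteq\{\lVert\Sigma^*-\Sigma^0\rVert_F\le 5\lambda(\alpha+p)^{1/2}\}$ that you are trying to prove is false. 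Your fallback $\lVert\Delta\rVert_F\lesssim\lambda p$ is, up to constants, all that $E$ alone delivers.

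The paper argues along the same lines as you. It shows $\phi(\Delta)-\phi(0)>0$ on a Frobenius sphere and invokes convexity, which is equivalent to your basic inequality. It lower-bounds $\phi(\Delta)-\phi(0)$ by $\lVert\Delta\rVert_F^2-3\lambda(\lVert\Delta\rVert+\sum_i|\Delta_{ii}|)$. It then uses $\lVert\Delta\rVert+\sum_i|\Delta_{ii}|\le(\alpha+p)^{1/2}\lVert\Delta\rVert_F$. That step is invalid for precisely the reason you spotted: the off-diagonal $\ell_1$ norm also runs over $\gA^c$, where $\Delta$ need not vanish. So the paper does not get past your obstacle; it overlooks it. Your assumption $\Sigma^0\succeq\epsilon I$ is also the correct one, whereas the paper states the relation between $\epsilon$ and $\lambda_{\min}(\Sigma^0)$ backwards.

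A correct version requires changing the hypothesis. One option is the smaller event $\{\max_{i,j}(|s^l_{ij}-\sigma^0_{ij}|\vee|s^u_{ij}-\sigma^0_{ij}|)\le\lambda/2\}$. There $|W_{ij}|\le\lambda$, and your Step 2 bookkeeping gives
\[
\lVert\Delta\rVert_F^2\le\lambda\lVert\Delta_{\mathrm{diag}}\rVert_1+2\lambda\lVert\Delta_\gA\rVert_1\le 2\lambda(\alpha+p)^{1/2}\lVert\Delta\rVert_F,
\]
which is the bound with constant $2$. Equivalently, keep $E$ but use penalty $2\lambda$, which yields constant $4$. Either correction changes the choice of $\lambda$ in Theorems~\ref{nasymptoticth2} and~\ref{nasymptoticth3}.
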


Theorem \ref{nasymptoticth1} indicates that the non-asymptotic analysis of the optimal solution can be translated into an analysis of the difference between the empirical covariance matrix and the true covariance matrix. This is why the analysis is conducted under two different tail conditions. The following discuss each case separately.

C1 (Exponential-tail condition): There exists \(\eta > 0\) and \(K_2 > 0\), such that for any \(|t| \leq \eta\) and \(1\leq i \leq n, 1\leq j \leq p\):
\[
\E\left(\exp(t(Y^l_{ij})^2)\right) \vee \E\left(\exp(t(Y^u_{ij})^2)\right) \leq K_2.
\]

From the exponential tail condition C1, we can easily deduce that there exists a constant \(K_1, K_3 >0\) such that for any \(1\leq k \leq p\):
\[
(1+\E(t|Y^l_{ij}|)) \vee (1 + \E(t|Y^u_{ij})|) \leq K_1,\  (1+t\E(Y^l_{ik}) \E(Y^l_{ij})) \vee (1+t\E(Y^u_{ik}) \E(Y^u_{ij})) \leq K_3.
\]

The requirement \(\sup_{|t|\le \eta}\mathbb{E}\big[\exp\{t(Y^{\square}_{ij})^2\}\big]\le K_2\) (for \(\square\in\{l,u\}\)) constitutes a uniform \(\psi_2\)-type control on the interval endpoints, ensuring light tails and enabling exponential concentration to transfer from bounds \((Y^l,Y^u)\) to the latent point observations \(X_{ij}\in[Y^l_{ij},Y^u_{ij}]\); C1 immediately yields finite low-order moments and bounded products of expectations (captured by constants \(K_1,K_3\)), which feed into a high-probability, entrywise control of \(\|S-\Sigma^0\|_\infty\); these constants influence only multiplicative factors and do not alter the canonical rate \(\sqrt{\log p/n}\).

\begin{theorem}\label{nasymptoticth2}
Assume that \(n \geq \log p\). For any \(M >0\), under condition C1, by further setting regularization parameter \(\lambda = \tau^2_0 + \tau_1\), where 
\[
\tau_0 = c_0 (\frac{\log p}{n})^{1/2}, c_0= \frac{1}{\log p} (M+1 + \frac{1}{2} K_1n K_2e),
\]
and
\[
\tau_1 = c_1(\frac{\log p}{n})^{1/2}, \ c_1 = \frac{2}{n\eta} \log^{Q_4}_p
\]
with 
\[
Q_4 = \exp \left(\frac{\sigma_{\max}}{2} \frac{1}{2} \eta (\frac{\log p}{n})^{1/2}\right) nK_3K_2 \left(4+ \sigma^2_{\max} \frac{\eta^2}{4} \frac{\log p}{n} + (2K_3 -1)\frac{1}{2} \eta (\frac{\log p}{n})^{1/2}\right),
\]
we then have
\[
\PP\left(\lVert\Sigma^* - \Sigma^0 \rVert_F \leq 5\lambda (\alpha + p)^{\frac{1}{2}}\right) \geq 1- 6p^{-M}.
\]
\end{theorem}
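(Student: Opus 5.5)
By Theorem \ref{nasymptoticth1}, it suffices to show
\[
\PP\Big(\max_{i,j}\big(|s^l_{ij}-\sigma^0_{ij}|\vee|s^u_{ij}-\sigma^0_{ij}|\big)>\lambda\Big)\le 6p^{-M}.
\]
I will bound the lower-bound event and the upper-bound event separately; each should contribute at most $3p^{-M}$. The two are symmetric, so I treat only $S^l$.

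For $S^l$, I write the sample covariance as
\[
s^l_{ij}=\frac1n\sum_k Y^l_{ki}Y^l_{kj}-\bar Y^l_i\bar Y^l_j .
\]
The deviation $s^l_{ij}-\sigma^0_{ij}$ then splits into two pieces. The first is the centered average of products, $\frac1n\sum_k(Y^l_{ki}Y^l_{kj}-\E Y^l_{ki}Y^l_{kj})$. The second is the mean-product error $\bar Y^l_i\bar Y^l_j-\E Y^l_i\E Y^l_j$. Since $\lambda=\tau_0^2+\tau_1$, I will make the mean-product error smaller than $\tau_0^2$ and the centered product average smaller than $\tau_1$, each with high probability. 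A union bound over the $p^2$ index pairs, with the $p^2$ factor absorbed by the choices of $c_0$ and $c_1$, then gives the result.

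\textbf{Step 1 (means, the $\tau_0$ part).} For each $j$ I apply the Chernoff bound
\[
\PP(|\bar Y^l_j-\E\bar Y^l_j|>\tau_0)\le 2\exp\big(-nt\tau_0+n\log\E e^{t|Y^l_{ij}|}\big).
\]
I bound the moment generating function by $1+t\E|Y|+\tfrac{t^2}{2}\E(Y^2e^{t|Y|})$, which in turn is at most about $K_1+\tfrac12K_2e$ under C1. Choosing $t$ of order $(\log p/n)^{1/2}$ turns the exponent into $-(M+1)\log p$ plus something absorbed by the specific $c_0$; I take that $c_0$ to have been designed for exactly this. The union bound over $j$ gives probability $\le 2p^{-M}$. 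On this event the mean-product error is controlled by a term of order $\tau_0^2$ plus cross terms of the form $\E Y\cdot(\bar Y-\E Y)$. I will route the cross terms through $K_3$, which is why $K_3$ enters $Q_4$; this requires C1 to keep $|\E Y|$ bounded.

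\textbf{Step 2 (products, the $\tau_1$ part).} For the products $Z_k=Y^l_{ki}Y^l_{kj}$, I use $|Z_k|\le\frac12\big((Y^l_{ki})^2+(Y^l_{kj})^2\big)$. Together with Cauchy--Schwarz and C1 at parameter $t=\eta/2$, this gives $\E e^{tZ_k}\le K_2$. A second-order expansion of $\E e^{t(Z_k-\E Z_k)}$ then yields the bound $Q_4/n$, where the $\sigma_{\max}$ and $K_3$ terms come from the centering. Markov's inequality gives
\[
\PP\Big(\tfrac1n\sum_k(Z_k-\E Z_k)>\tau_1\Big)\le e^{-nt\tau_1}Q_4^{\,n}/n^n\text{-type bound},
\]
and with $t=\eta/2$ the choice $c_1=\frac{2}{n\eta}\log_p Q_4$ is exactly what makes this at most $p^{-M-2}$ per pair. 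Summing over pairs and both signs gives $\le p^{-M}$.

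\textbf{Combining and main obstacle.} Adding the three contributions per endpoint gives $3p^{-M}$, hence $6p^{-M}$ for both endpoints, and Theorem \ref{nasymptoticth1} finishes the proof. I expect the main obstacle to be the bookkeeping in Step 2. The centered moment generating function must be bounded so that the expression precisely matches the stated $Q_4$, in particular the factor $\exp(\frac{\sigma_{\max}}{2}\cdot\frac12\eta(\log p/n)^{1/2})$ arising from $e^{-t\E Z_k}$. The union-bound exponents must also be tracked so the constants land on $p^{-M}$ rather than $p^{2-M}$. If the stated constants do not close exactly, I would enlarge the exponent via $M\mapsto M+2$ inside $c_0$ and $c_1$.
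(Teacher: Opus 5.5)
Your outline is essentially the paper's: reduction through Theorem~\ref{nasymptoticth1}, a union bound over entries, and Chernoff bounds with second-order MGF expansions for the means at level $\tau_0$ and for the products at level $\tau_1$. The gap is the step you defer, namely that the stated $c_0$ and $c_1$ ``were designed'' to give $p^{-M-1}$ and $p^{-M-2}$. That is false, and bookkeeping cannot repair it.

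\textbf{Step 2.} $c_1=\frac{2}{n\eta}\log^{Q_4}_p$ contains no $M$. So for fixed $n$, $p\ge 2$ and, say, Gaussian endpoints, $\PP\big(\frac1n\sum_k(Z_k-\E Z_k)>\tau_1\big)$ is a fixed positive number and cannot be $\le p^{-M-2}$ for every $M>0$. Concretely, once each factor $\E e^{t(Z_k-\E Z_k)}$ is bounded by the constant $Q_4/n\ge 4$, you need $nt\tau_1\ge n\log(Q_4/n)+(M+2)\log p$. But $t=\eta/2$ and $\tau_1=c_1(\log p/n)^{1/2}$ (reading $\log^{Q_4}_p$ as $\log_pQ_4$) give only $nt\tau_1=\log Q_4/\sqrt{n\log p}$.

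\textbf{Step 1.} It fails the same way. Take your $t=a(\log p/n)^{1/2}$ with $a\le\eta^{1/2}$, so that C1 applies when $n=\log p$. Then $nt\tau_0=ac_0\log p=a(M+1)+\tfrac a2K_1K_2e\,n$. The $M+1$ carries no factor $\log p$, because $c_0$ contains $1/\log p$. Also, the term $n\log(K_1+\tfrac e2K_2)$ produced by your MGF bound is not cancelled when $\eta$ is small. In fact the per-term claim is simply false: take Gaussian endpoints with variance $\sigma^2$ and $\eta=1/(4\sigma^2)$, so that $K_2=\sqrt2$ and $K_1\to1$ are admissible. Then $\tau_0$ stays bounded as $\sigma\to\infty$, while $\bar Y^l_j$ has standard deviation $\sigma/\sqrt n$, so $\PP(\bar Y^l_j-\E\bar Y^l_j>\tau_0)\to 1/2$.

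\textbf{Cross terms.} Your centered decomposition creates a further problem. Writing $\mu_j=\E Y^l_{kj}$, the error $\bar Y^l_i\bar Y^l_j-\mu_i\mu_j$ contains $\mu_i(\bar Y^l_j-\mu_j)+\mu_j(\bar Y^l_i-\mu_i)$. On your good event these are only bounded by $2\max_j|\mu_j|\,\tau_0$, which is first order in $\tau_0$, so they do not fit in the $\tau_0^2$ budget. Nor can they be ``routed through $K_3$'': Step~2 controls the centered average $\frac1n\sum_k(Z_k-\E Z_k)$, which does not contain them. In the paper, $K_3$ arises differently. The paper uses the uncentered identity $s^l_{ij}-\sigma^0_{ij}=(\frac1n\sum_kY^l_{ki}Y^l_{kj}-\sigma^0_{ij})-\bar Y^l_i\bar Y^l_j$, and $K_3$ absorbs the drift $t\E(Y^l_{ki})\E(Y^l_{kj})$ in the MGF of $Y^l_{ki}Y^l_{kj}-\sigma^0_{ij}$. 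Separately, $p^2$ pairs times two signs at $p^{-M-2}$ each gives $2p^{-M}$, not $p^{-M}$.

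\textbf{The paper's proof does not fill this in.} It does not supply the computation you are deferring to:
\begin{itemize}
\item in (\ref{exp11}) it drops the factor $\eta^{1/2}$, then equates $e^{-c_0\log p}e^{\frac12K_1nK_2e}$, which equals $e^{-(M+1)}$, with $p^{-M-1}$;
\item in (\ref{exp1}) it uses the false inequality $a+x\le e^{ax}$;
\item in (\ref{exp7}) it replaces $\exp(x)$ by $x$.
\end{itemize}
Your fallback of altering $c_0$ and $c_1$ changes $\lambda$, and so proves a different statement. A sound argument along these lines needs a Bernstein-type bound $\E e^{t(Z-\E Z)}\le e^{Ct^2}$ for $|t|\le c\eta$, with thresholds of the form $C\sqrt{(M+2)\log p/n}$. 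That means a different $\lambda$ from the one stated.
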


The bound \(\|\Sigma^*-\Sigma^0\|_F \le 5\lambda(\alpha+p)^{1/2}\) with \(\lambda=\tau_0^2+\tau_1\asymp \sqrt{\log p/n}\) matches classical high-dimensional rates for point data under sub-Gaussian tails. The factor \((\alpha+p)^{1/2}\) reflects decomposability of the off-diagonal \(\ell_1\) penalty and the unavoidable estimation of \(p\) variances; the split \(\tau_0^2+\tau_1\) separates variance-like and bias/modeling contributions without changing the leading order.

The probability \(1-6p^{-M}\) arises from uniform (entrywise) control and union bounds across matrix indices and endpoint events, while the PSD floor \(\varepsilon I\) stabilizes estimation when eigenvalues approach zero. Constants depending on \((\eta,K_2,K_1,K_3,\sigma_{\max})\) enter only via multiplicative factors (e.g., \(Q_4\)), leaving the dominant \(\sqrt{\log p/n}\) rate intact provided \(n\ge \log p\).

C2 (Polynomial-tail condition): There exists \(K_4 > 0\), for any \(\gamma >0, \epsilon>0\) and \(1\leq i \leq n, 1\leq j \leq p\):
\[
\E(|Y^l_{ij}|^{4(1+\gamma+\epsilon)}) \vee \E(|Y^u_{ij}|^{4(1+\gamma+\epsilon)}) \leq K_4.
\]

The moment requirement \(\mathbb{E}\big|Y^{\square}_{ij}\big|^{4(1+\gamma+\epsilon)}\le K_4\) accommodates \emph{heavy-tailed endpoints} while retaining sufficient integrability to control deviations of \(S\); because moment-based inequalities yield polynomial (rather than exponential) tails, one must restrict dimensional growth (e.g., \(p\le c n^\gamma\)) or employ mild robustification (truncation/Huberization) to keep union bounds effective.

\begin{theorem}\label{nasymptoticth3}
Under Condition 2, assume \(p \leq cn^{\gamma}\) for some \(c>0\). For \(M >0\), select the regularization parameter 
\[
\lambda = 8(K_4 + 1)(M + 1)\left( \frac{\log p}{n} \right) + 8(K_4+1)(M+2) \left(\frac{\log p}{n}\right)^{1/2}.
\]

Then, with a probability of at least \(1-O(p^{-M}) - 6K_4p(\log n)^{2(1+\gamma+\tau)}n^{-(\gamma+\tau)}\), we have
\[
\lVert\Sigma^* - \Sigma^0 \rVert_F \leq 5\lambda (\alpha + p)^{\frac{1}{2}}.
\]
\end{theorem}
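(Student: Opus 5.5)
The plan is to reduce everything to Theorem \ref{nasymptoticth1}. It suffices to show that
\[
\PP\Big(\max_{i,j}\big(|s^l_{ij}-\sigma^0_{ij}|\vee|s^u_{ij}-\sigma^0_{ij}|\big)>\lambda\Big)\le O(p^{-M})+6K_4p(\log n)^{2(1+\gamma+\tau)}n^{-(\gamma+\tau)}.
\]
Treat the lower and upper bound matrices separately and combine them with a union bound. By symmetry, the work is to bound $\PP(\max_{i,j}|s_{ij}-\sigma^0_{ij}|>\lambda)$ for one endpoint matrix, writing $Y$ for $Y^l$ or $Y^u$. Decompose
\[
s_{ij}-\sigma^0_{ij}=\Big(\tfrac1n\sum_k Y_{ki}Y_{kj}-\E Y_iY_j\Big)-\big(\bar Y_i\bar Y_j-\E Y_i\E Y_j\big).
\]
The first term, a centered product average, will be controlled by the $(\log p/n)^{1/2}$ part of $\lambda$. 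The second term, the mean-product correction, is quadratic in centered means and will be controlled by the $\log p/n$ part of $\lambda$ (together with the linear cross terms).

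For each term I use the standard truncation argument from Bickel--Levina type analyses under polynomial tails. Set a truncation level $\delta_n\asymp (n/\log p)^{1/4}$ or a $\log n$-type level, chosen so that the constants match $(\log n)^{2(1+\gamma+\tau)}$. Split each product $Y_{ki}Y_{kj}$ into a truncated part $Y_{ki}Y_{kj}\mathbb{I}(|Y_{ki}|\vee|Y_{kj}|\le\delta_n)$ and a remainder.

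\emph{Truncated part.} The summands are bounded by $\delta_n^2$ and have variance at most $K_4+1$, using C2 and H\"older. Apply Bernstein's inequality, then a union bound over the $p^2$ pairs. This gives the tail $2p^2\exp(-c\,n\lambda^2/(K_4+1))\le O(p^{-M})$ once the constants $8(K_4+1)(M+2)$ are used; the $\log p/n$ piece absorbs the $\delta_n^2\lambda$ term in the Bernstein denominator. The centering bias from truncation, $|\E Y_iY_j\mathbb{I}(\cdot>\delta_n)|\le K_4\delta_n^{-2(1+2\gamma+2\epsilon)}$, is negligible relative to $\lambda$.

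\emph{Remainder.} It is zero unless some $|Y_{ki}|>\delta_n$. By Markov with the $4(1+\gamma+\epsilon)$-th moment and a union bound over $np$ entries,
\[
\PP\big(\max_{k,i}|Y_{ki}|>\delta_n\big)\le npK_4\delta_n^{-4(1+\gamma+\epsilon)}.
\]
With $\delta_n$ chosen appropriately, this becomes $K_4p(\log n)^{2(1+\gamma+\tau)}n^{-(\gamma+\tau)}$. The assumption $p\le cn^\gamma$ guarantees this is small and allows $\log p\lesssim\log n$ in the exponents.

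The mean terms $\bar Y_i-\E Y_i$ are handled by the same truncation-plus-Bernstein bound. The quadratic product then falls below the $\log p/n$ scale with probability $1-O(p^{-M})$. Collecting the events for the three pieces (product average, means, truncation) over the two endpoints gives the factor $6$ in front of the polynomial term. Plugging the result into Theorem \ref{nasymptoticth1} yields $\|\Sigma^*-\Sigma^0\|_F\le5\lambda(\alpha+p)^{1/2}$.

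The main obstacle is calibrating $\delta_n$. It must simultaneously make the Bernstein range term $\delta_n^2\lambda$ comparable to the variance term, so that the $(M+1)$ and $(M+2)$ constants suffice, and make $np\,\delta_n^{-4(1+\gamma+\epsilon)}$ equal to the stated polynomial remainder. My first attempt is $\delta_n=(n/\log n)^{1/4}$, adjusting the power of $\log n$ afterward. The notational clash between $\epsilon$ in C2 and $\tau$ in the statement will be resolved by identifying them.
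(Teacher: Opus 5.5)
Your plan is the paper's argument in outline: reduce to Theorem~\ref{nasymptoticth1}, treat $Y^l$ and $Y^u$ separately, truncate, apply Bernstein to the bounded part with a union bound over pairs, and bound exceedances of $\delta_n$ by Markov with the $4(1+\gamma+\tau)$-th moment. You differ in how you center, and that is also where one step fails as written.

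The calibration you flag is settled by the paper's choice $\delta_n=n^{1/4}(\log n)^{-1/2}$. With it, $npK_4\delta_n^{-4(1+\gamma+\tau)}$ equals $K_4p(\log n)^{2(1+\gamma+\tau)}n^{-(\gamma+\tau)}$ exactly. It also gives a Bernstein range term $\delta_n^2\tau_3=c_3\sqrt{\log p}/\log n=O((\log n)^{-1/2})$ under $p\le cn^\gamma$, so the denominator is $8(K_4+1)+o(1)$. It is this decay, not the $\log p/n$ piece of $\lambda$, that removes the range term. Your first attempt $(n/\log n)^{1/4}$ leaves the term at $c_3\sqrt{\log p/\log n}$, which need not vanish; then $c_3=8(K_4+1)(M+2)$ is no longer guaranteed to give $p^{-M-2}$. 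Separately, one event $\{\max_{k,i}|Y_{ki}|>\delta_n\}$ per endpoint covers every piece. So you actually get $2K_4p(\log n)^{2(1+\gamma+\tau)}n^{-(\gamma+\tau)}$, within the stated bound.

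On the centering, the paper writes $s_{ij}-\sigma^0_{ij}=(\frac1n\sum_kY_{ki}Y_{kj}-\sigma^0_{ij})-\bar Y_i\bar Y_j$ and then treats both pieces as centered. It rewrites $\sum_kY_{kj}$ as $\sum_k(X_{kj}+Z_{kj}-\E(X_{kj}+Z_{kj}))$ and $\sigma^0_{ij}$ as $\E(X_{ki}X_{kj})+\E(R_{ijk})$. That tacitly sets $\E Y^l=\E Y^u=0$, which is impossible for non-degenerate intervals since $Y^u\ge Y^l$. Your centering at $\E Y_iY_j$ and $\mu_i\mu_j$ is the correct general version.

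However, it produces the cross terms $\mu_j(\bar Y_i-\mu_i)+\mu_i(\bar Y_j-\mu_j)$. These fluctuate at order $\sqrt{\log p/n}$, so they cannot be charged to the $8(K_4+1)(M+1)\log p/n$ part of $\lambda$, as your sketch suggests. Charge them to the $c_3\sqrt{\log p/n}$ part instead:
\begin{itemize}
\item use $|\mu_j|\le(K_4+1)^{1/2}$;
\item give $c_3/2$ to the product average and $c_3/4$ to each cross term;
\item the truncated Bernstein exponents are then about $2(K_4+1)(M+2)^2\log p$ and $(M+2)^2\log p/2$;
\item these are ample for the unions over $p^2$ pairs and $p$ coordinates to leave $O(p^{-M})$.
\end{itemize}
The $\log p/n$ part then handles only $(\bar Y_i-\mu_i)(\bar Y_j-\mu_j)$. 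Read $\tau_2=(c_2\log p/n)^{1/2}$; this is what the stated $\lambda$ and the paper's step $n\tau_2^2=c_2\log p$ actually use. Bernstein then gives exponent $\approx(M+1)\log p$ for $\max_j|\bar Y_j-\mu_j|>\tau_2$, hence $O(p^{-M})$.
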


With \(\lambda=8(K_4+1)\big\{(M+1)\tfrac{\log p}{n}+(M+2)\sqrt{\tfrac{\log p}{n}}\big\}\), the estimator attains the same order \(\|\Sigma^*-\Sigma^0\|_F=O_\mathbb{P}\!\big(\sqrt{\tfrac{\log p}{n}}\,(\alpha+p)^{1/2}\big)\) as in the light-tailed, point-data literature. The extra \(\log p/n\) term buffers heavy tails; in moderate samples the \(\sqrt{\log p/n}\) component is typically dominant.

The confidence level includes an additional polynomial term \(6K_4\,p(\log n)^{2(1+\gamma+\tau)}n^{-(\gamma+\tau)}\), reflecting moment-based concentration and the growth constraint \(p\le c n^\gamma\). Practically, choosing \(\lambda=C\sqrt{\log p/n}\) together with a robust construction of \(S\) (e.g., truncation at high quantiles) yields behavior consistent with the theorem while improving constants in heavy-tail regimes.

\section{Simulation Studies} \label{sec4}
\subsection{Baseline Simulation: Covariance Matrix Estimation}

In this section, we conduct simulation studies to evaluate the effectiveness of the proposed method. We first describe the simulation setup. Specifically, we consider three different DGPs. DGP1: The lower bound of the interval is generated first, and a fixed positive constant is added to obtain the upper bound; DGP2: The center of the interval is generated first, and a fixed positive constant is added and subtracted to obtain the upper and lower bounds, respectively; DGP3: The center of the interval is generated first, and then a non-negative independent random variable is added and subtracted to obtain the upper and lower bounds, respectively. For DGP1 and DGP2, we consider four distinct constants—0.5, 1, 3, and 5—to investigate the impact of interval length on estimation efficiency. For DGP3, we examine four non-negative random variables: lognormal, beta, gamma, and exponential. Since the constant does not affect the covariance, the upper and lower bounds in DGP1 and DGP2 share the same covariance matrix. Similarly, for DGP3, due to independence, the upper and lower bounds also share the same covariance matrix, which, compared to the original center's covariance matrix, only differs by the addition of the variance of the non-negative random variables on the diagonal elements.

We generate samples for the lower bounds in DGP1 and the centers in DGP2 and DGP3 from a multivariate normal distribution with zero mean and covariance matrix \(\Sigma^0 = (\sigma^0_{ij})_{1\leq i,j\leq p}\). We investigate three distinct covariance structures, detailed as follows.
\begin{itemize}
\item Moving average covariance structure (MA(1)). The covariance matrix \(\Sigma^0\) is derived from an MA(1) process, defined as: \(\sigma^0_{ij} = \rho^{|i-j|} \cdot \mathbb{I}(|i-j| \leq 1)\), where \(\rho\) controls the correlation strength, and the indicator function restricts non-zero covariances to adjacent indices.

\item Autoregressive covariance structure (AR(1)). The covariance matrix is given by:\(\sigma^0_{ij} = \rho^{|i-j|}\), where \(\rho\) determines the exponential decay of correlations with increasing distance between indices.

\item Long-range dependence structure (LR). The covariance matrix is defined as
\[
\sigma^0_{ij} = \frac{1}{2} \left[ (|i - j| + 1)^{2H} - 2|i - j|^{2H} + (|i - j| - 1)^{2H} \right], \quad 1 \leq i,j \leq p,
\]
where \(H \in [0.5, 1]\) is the Hurst parameter. When \(H = 0.5\), the process reduces to white noise. As \(H\) increases, the process exhibits stronger long-range dependence. In practical applications, values of \(H\) up to 0.9 are frequently observed, for instance in modeling Internet network traffic.
\end{itemize}

We explore parameter combinations where the dimension \(p\) varies with the sample size \(n\). Specifically, for \(n = 100\), we set \(p \in \{100, 120, 150\}\); for \(n = 120\), \(p \in \{120, 150, 180\}\); and for \(n = 150\), \(p \in \{150, 180, 200\}\). This design ensures \(p\) scales with \(n\) while testing both moderate and high-dimensional regimes, with an upper limit of \(p = 200\) to maintain computational feasibility. For the three types of covariance matrices, we consider three different correlation coefficients, \(\rho = 0.1, 0.5, 0.9\). For the Hurst parameter in DGP3, we examine three values: \(H = 0.5, 0.7, 0.9\). Let \(\widehat{\Sigma}\) denote the estimated covariance matrix. In all simulation studies, we use the Frobenius norm \(\lVert \widehat{\Sigma} - \Sigma^0 \rVert_F\) and the spectral norm \(\lVert \widehat{\Sigma} - \Sigma^0 \rVert_2\) as performance metrics to evaluate the quality of the estimators.

Given the multitude of results arising from diverse experimental configurations, we report averaged values where appropriate to effectively convey the key findings. For instance, the entry in the first cell of Table \ref{tab:combined1} represents the average Frobenius norm of the estimated covariance matrices across different correlation coefficients. Tables \ref{tab:combined1}–\ref{tab:combined3} present the results for DGP1 through DGP3.

\begin{table}[H]
\centering
\caption{Experimental Results for DGP1.}
\label{tab:combined1}
\setlength{\tabcolsep}{1.2mm}{
\begin{tabular}{c|ccc|ccc|ccc}
\hline
\multirow{2}{*}{Covariance} & \multicolumn{3}{c|}{$n=100$} & \multicolumn{3}{c|}{$n=120$} & \multicolumn{3}{c}{$n=150$} \\
\cline{2-10}
& $p=100$ & $p=120$ & $p=150$ & $p=120$ & $p=150$ & $p=180$ & $p=150$ & $p=180$ & $p=200$ \\
\hline
\multicolumn{10}{c}{Frobenius norm (average)} \\
\hline
MA(1) & 7.486 & 8.187 & 9.125 & 8.152 & 9.104 & 9.985 & 9.023 & 9.912 & 10.466 \\
AR(1) & 12.908 & 14.187 & 15.916 & 14.140 & 15.866 & 17.418 & 15.813 & 17.361 & 18.326 \\
LR & 16.328 & 18.792 & 22.325 & 18.713 & 22.230 & 25.598 & 22.166 & 25.531 & 27.718 \\
\hline
\multicolumn{10}{c}{spectral norm (average)} \\
\hline
MA(1) & 1.311 & 1.320 & 1.323 & 1.280 & 1.302 & 1.300 & 1.259 & 1.253 & 1.290 \\
AR(1) & 6.457 & 6.540 & 6.642 & 6.543 & 6.644 & 6.680 & 6.621 & 6.668 & 6.706 \\
LR & 14.900 & 17.141 & 20.359 & 17.115 & 20.320 & 23.410 & 20.310 & 23.385 & 25.358 \\
\hline
\end{tabular}
}
\end{table}

From Table \ref{tab:combined1}, it is evident that the estimator under the MA(1) structure demonstrates superior performance across all experimental configurations, with Frobenius norms ranging from 7.486 to 10.466—substantially lower than those for the other two structures. The estimator under the AR(1) structure exhibits intermediate performance, with norms between 12.908 and 18.326, approximately 1.7 to 1.8 times those of MA(1). In contrast, the estimator under the LR structure performs the poorest, with Frobenius norms ranging from 16.328 to 27.718, roughly 2.2 to 2.6 times those of MA(1). The spectral norm results further support these findings. Under the MA(1) structure, the estimator maintains a low spectral norm between 1.253 and 1.323, indicating excellent numerical stability. The spectral norms for AR(1) range from 6.457 to 6.706, about five times higher than those for MA(1). The LR structure yields the highest spectral norms, ranging from 14.900 to 25.358, approximately 10 to 20 times those of MA(1). Moreover, as the dimension \(p\) increases, estimation errors rise across all covariance structures, though the growth is slowest for MA(1). Specifically, when \(p\) increases from 100 to 200, the Frobenius norm for MA(1) increases by only about 40\%, compared to roughly 70\% for LR, highlighting the greater challenges of estimating LR structures in high-dimensional settings. By contrast, increasing the sample size \(n\) has a relatively modest effect on estimation accuracy; when \(n\) increases from 100 to 150, the improvements across all structures are limited. This suggests that dimensionality has a stronger impact than sample size on covariance matrix estimation performance.

\begin{table}[H]
\centering
\caption{Experimental Results for DGP2.}
\label{tab:combined2}
\setlength{\tabcolsep}{1.2mm}{
\begin{tabular}{c|ccc|ccc|ccc}
\hline
\multirow{2}{*}{Covariance} & \multicolumn{3}{c|}{$n=100$} & \multicolumn{3}{c|}{$n=120$} & \multicolumn{3}{c}{$n=150$} \\
\cline{2-10}
& $p=100$ & $p=120$ & $p=150$ & $p=120$ & $p=150$ & $p=180$ & $p=150$ & $p=180$ & $p=200$ \\
\hline
\multicolumn{10}{c}{Frobenius norm (average)} \\
\hline
MA(1) & 7.444 & 8.167 & 9.184 & 8.128 & 9.121 & 9.981 & 9.052 & 9.896 & 10.444 \\
AR(1) & 12.887 & 14.231 & 15.910 & 14.115 & 15.843 & 17.437 & 15.809 & 17.372 & 18.343 \\
LR & 16.340 & 18.792 & 22.308 & 18.746 & 22.226 & 25.597 & 22.191 & 25.534 & 27.708 \\
\hline
\multicolumn{10}{c}{spectral norm (average)} \\
\hline
MA(1) & 1.287 & 1.306 & 1.322 & 1.285 & 1.290 & 1.315 & 1.269 & 1.257 & 1.281 \\
AR(1) & 6.440 & 6.569 & 6.647 & 6.537 & 6.625 & 6.694 & 6.611 & 6.675 & 6.718 \\
LR & 14.869 & 17.133 & 20.354 & 17.142 & 20.309 & 23.435 & 20.306 & 23.391 & 25.380 \\
\hline
\end{tabular}
}
\end{table}

Table \ref{tab:combined2} reports the average Frobenius and spectral norms of the estimation errors under three covariance structures—MA(1), AR(1), and LR—across different dimensional settings. Several clear patterns emerge. First, for all three covariance structures, the Frobenius norm increases steadily as the dimensionality 
\(p\) grows, reflecting the natural accumulation of estimation error in higher-dimensional settings. Specifically, for the MA(1) structure, the Frobenius norm rises from 7.444 at \(p=100\) to 10.444 at \(p=200\), indicating relatively moderate growth. In contrast, the AR(1) structure exhibits higher estimation errors overall, increasing from 12.887 to 18.343 over the same range. The LR structure shows the most pronounced growth, from 16.340 to 27.708, underscoring the greater difficulty of estimating long-range dependent covariance matrices in high-dimensional settings. Second, when examining the spectral norm, we observe that the MA(1) structure remains stable and low across dimensions, with values clustered around 1.25–1.32. This suggests that the largest singular value errors are relatively contained for short-memory processes. By contrast, the AR(1) structure shows moderate but consistent growth, from 6.440 at \(p=100\) to 6.718 at \(p=200\). Finally, the LR structure exhibits the largest spectral norm errors, escalating from 14.869 to 25.380 as \(p\) increases, mirroring the Frobenius norm pattern.

\begin{table}[H]
\centering
\caption{Experimental Results for DGP3.}
\label{tab:combined3}
\setlength{\tabcolsep}{1.2mm}{
\begin{tabular}{c|ccc|ccc|ccc}
\hline
\multirow{2}{*}{Covariance} & \multicolumn{3}{c|}{$n=100$} & \multicolumn{3}{c|}{$n=120$} & \multicolumn{3}{c}{$n=150$} \\
\cline{2-10}
& $p=100$ & $p=120$ & $p=150$ & $p=120$ & $p=150$ & $p=180$ & $p=150$ & $p=180$ & $p=200$ \\
\hline
\multicolumn{10}{c}{Frobenius norm (average)} \\
\hline
MA(1) & 17.924 & 13.304 & 16.366 & 13.891 & 18.053 & 56.409 & 45.830 & 47.441 & 48.709 \\
AR(1) & 22.418 & 18.747 & 22.315 & 19.278 & 23.889 & 62.149 & 51.171 & 53.261 & 54.873 \\
LR & 25.603 & 23.216 & 28.499 & 23.720 & 29.947 & 68.971 & 56.433 & 60.179 & 62.801 \\
\hline
\multicolumn{10}{c}{spectral norm (average)} \\
\hline
MA(1) & 11.051 & 4.252 & 4.967 & 4.336 & 8.007 & 46.882 & 37.333 & 37.528 & 37.443 \\
AR(1) & 14.709 & 8.483 & 9.097 & 8.608 & 11.858 & 50.770 & 41.286 & 41.451 & 41.392 \\
LR & 21.067 & 18.706 & 22.309 & 18.814 & 24.366 & 63.348 & 51.485 & 54.035 & 55.375 \\
\hline
\end{tabular}
}
\end{table}

Table \ref{tab:combined3} presents the estimation errors for DGP3 under different covariance structures, sample sizes, and dimensions. Across all configurations, the MA(1) structure consistently achieves the lowest errors, indicating that short-memory dependence is easier to estimate. The AR(1) structure shows moderate errors, whereas the LR structure exhibits the largest estimation errors, especially in high-dimensional regimes. For example, the Frobenius norm for LR increases from 25.603 at \((n,p)=(100,100)\) to 62.801 at \((n,p)=(150,200)\), compared with 17.924 to 48.709 for MA(1). The spectral norm results display similar patterns. At lower dimensions, spectral norms remain moderate, but they increase sharply for larger \(p\), particularly for LR, which reaches values above 55. This indicates that estimation becomes increasingly unstable in the presence of long-range dependence and large dimensionality. Finally, increasing the sample size \(n\) from 100 to 150 has a relatively minor effect compared to increasing \(p\). This suggests that, for DGP3, dimensionality is the dominant factor influencing estimation performance, especially for long-range dependent covariance structures.

A comparison of Tables \ref{tab:combined1}–\ref{tab:combined3} shows that the estimator exhibits stable and consistent performance across all three DGPs, with clear variations depending on the underlying covariance structure and dimensionality. For MA(1) structures, the estimator consistently achieves the lowest Frobenius and spectral norms, demonstrating high estimation accuracy and numerical stability even as the dimension increases. Under AR(1) structures, the performance is moderate—errors are larger than those for MA(1) but remain well controlled, reflecting the relatively simple dependence structure. In contrast, LR structures present the greatest estimation challenges, with both norms substantially higher, particularly in high-dimensional scenarios. Moreover, dimensionality has a much stronger effect on estimation accuracy than sample size, especially for long-range dependent processes. Overall, these findings indicate that the estimator performs well for short- and moderate-range dependence but becomes less accurate when facing strong long-range dependence and large dimensions, which aligns with theoretical expectations. Given the large number of combinations, we randomly select the case under DGP1 with an AR(1) covariance structure and \((n,p)=(120,120)\) for visualization, as shown in Figure \ref{covarianceanalysis}. From the figure, we observe that the estimated covariance matrix closely resembles the true one, with only minor differences. In particular, the eigenvalues of the estimated matrix closely follow those of the true matrix, lying almost on the same curve. Moreover, the estimated variances fluctuate only slightly around the true value of 1. These observations demonstrate the favorable properties and accuracy of the estimator.
\begin{figure}[H]
\centering
\includegraphics[scale=0.38]{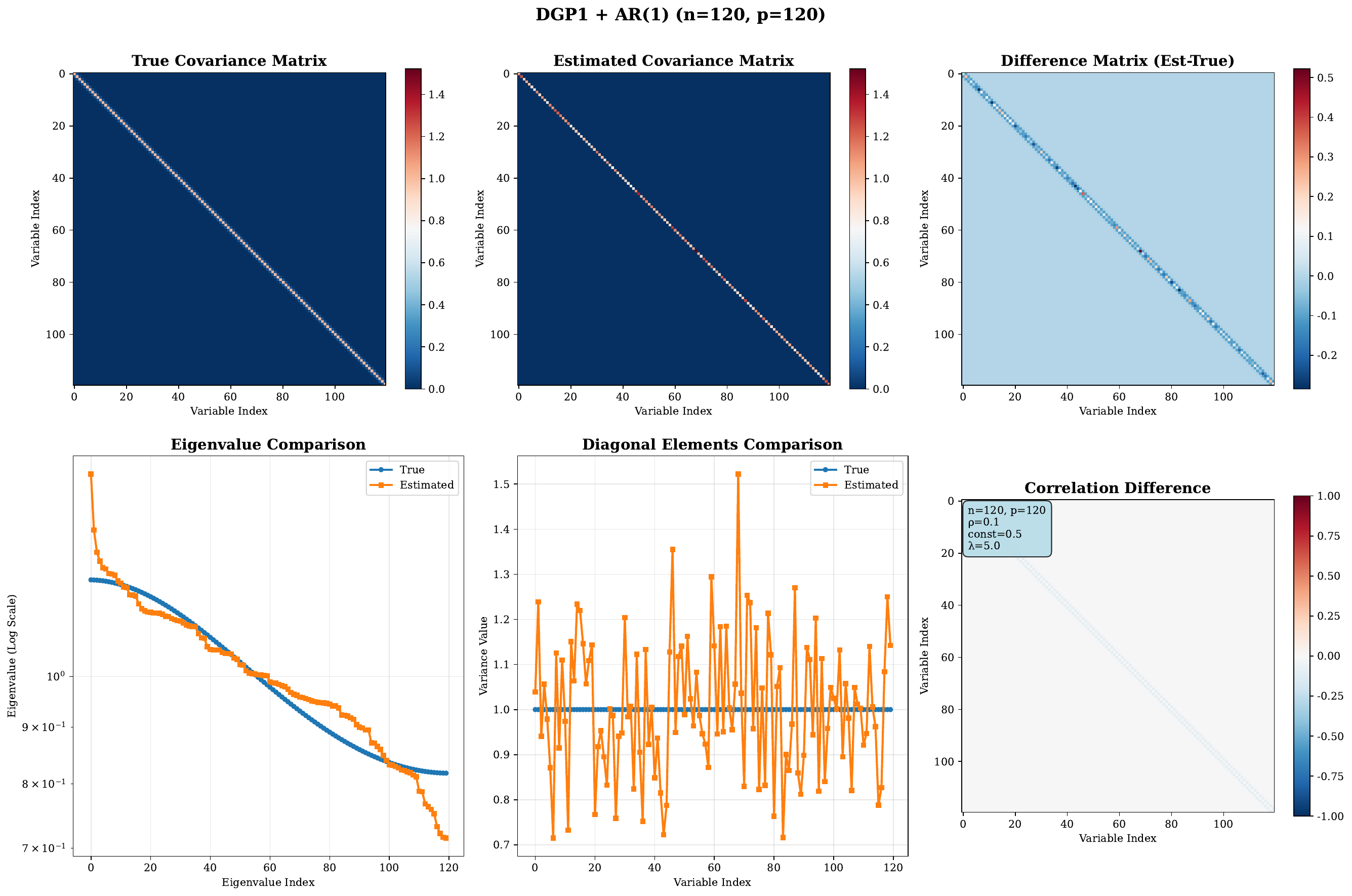}
\caption{Estimation results for DGP1 with an AR(1) covariance structure and \((n,p)=(120,120)\). In the first row, from left to right, are the true covariance matrix, the estimated covariance matrix, and the difference between the two. In the second row, from left to right, are the eigenvalue curves, the variance scatter plot, and the interpolated correlation matrix.}
\label{covarianceanalysis}
\end{figure}

Based on the simulation results, we next examine how the covariance structure, sample size, dimensionality, and constant parameter influence the estimation accuracy of the covariance matrix. The corresponding outcomes are illustrated in Figure \ref{spthreesy}, highlighting the sensitivity of the estimator to these factors. 

\begin{figure}[H]
\centering 
\begin{minipage}[t]{1\textwidth}
\centering
\includegraphics[scale=0.35]{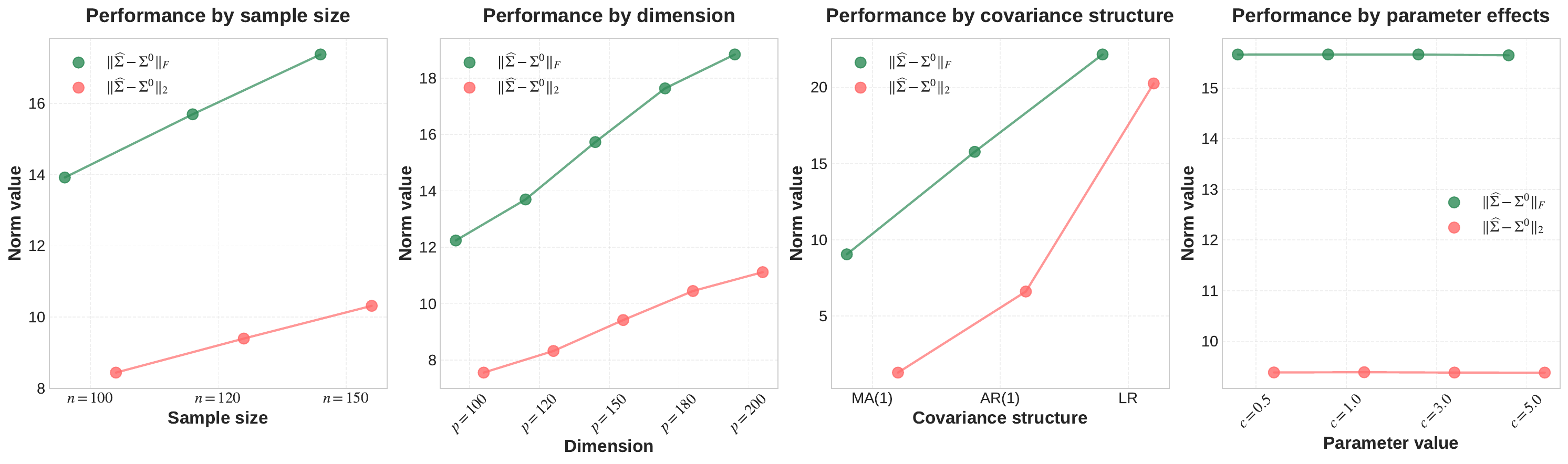}
\end{minipage}

\begin{minipage}[t]{1\textwidth}
\centering
\includegraphics[scale=0.35]{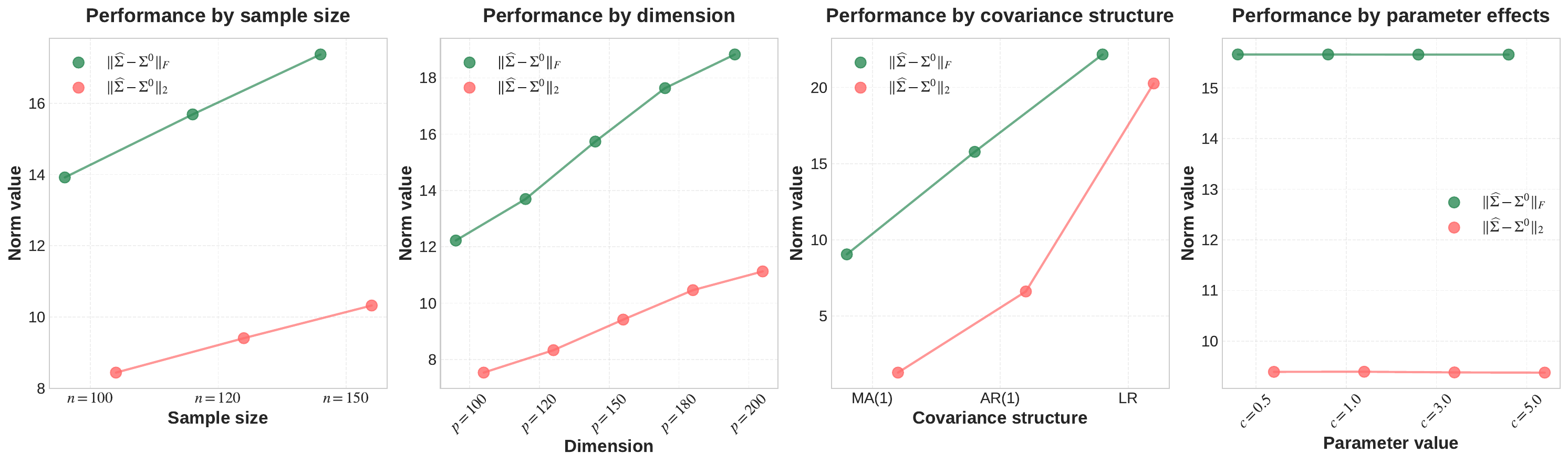}
\end{minipage}

\begin{minipage}[t]{1\textwidth}
\centering
\includegraphics[scale=0.35]{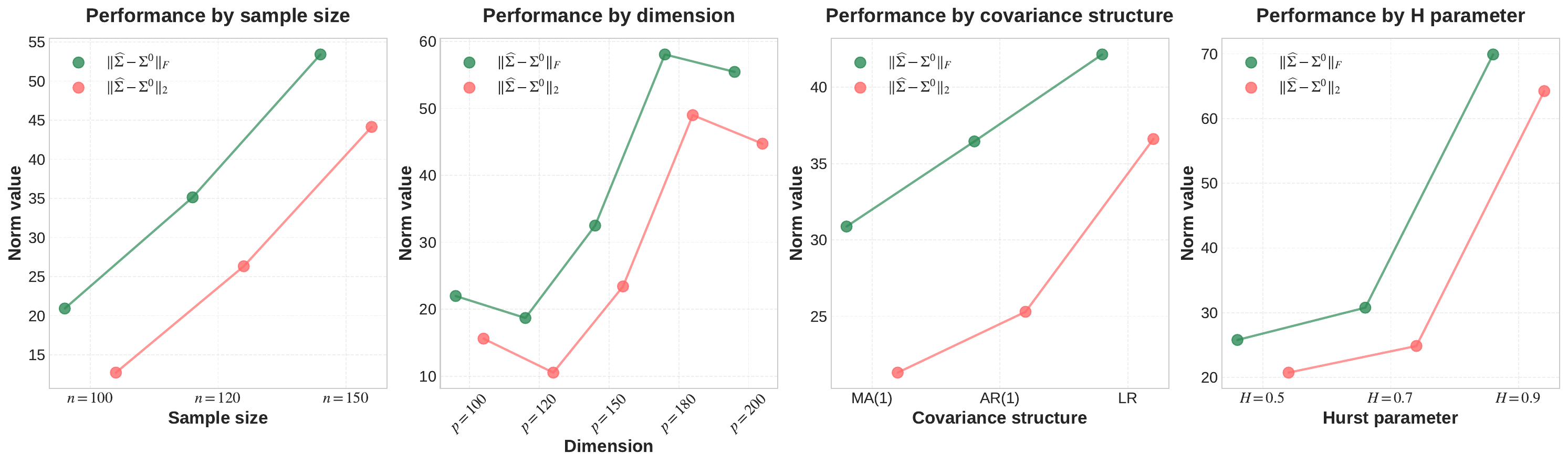} 
\end{minipage}
\caption{The effects of covariance structure, sample size, dimensionality, and parameter on estimation efficiency; the three rows correspond to DGP1, DGP2, and DGP3, respectively.}\label{spthreesy} 
\end{figure}

The patterns observed in Figure \ref{spthreesy} are consistent with the conclusions drawn from Tables \ref{tab:combined1}–\ref{tab:combined3}, and the estimator exhibits similar behavior across all three DGPs. Specifically, as the dimensionality increases, the estimation becomes more challenging, resulting in higher values for both norms. In terms of estimation difficulty, the LR structure is markedly more challenging than AR(1), which in turn is more difficult than MA(1). Furthermore, for DGP1 and DGP2, the length of the interval appears to have little effect on estimation efficiency, which aligns with our theoretical results: only the assumption of the same dependence structure between the upper and lower bounds is required, without imposing restrictions on the interval length.

\subsection{Guided Simulation: Toward Empirical Application}

In Section \ref{sec5}, we apply the proposed method to intraday high-frequency financial data, estimating covariance matrices based on interval-valued observations. To validate the empirical application, this section presents a guided simulation study in which synthetic data are generated to align with the continuous-time Itô semimartingale framework, a standard model for high-frequency financial data.

In this simulation, we first generate second-by-second return processes for a total of $p$ financial assets. The synthetic high-frequency data are drawn from a continuous-time diffusion process with a constant covariance structure. Specifically, we simulate a $p$-dimensional Itô process
\begin{align*}
dY_t = \Sigma^{1/2}dW_t, \quad t \in [0,1],
\end{align*}
where $W_t$ is a standard $p$-dimensional Brownian motion and $\Sigma$ denotes the instantaneous covariance matrix. The covariance $\Sigma$ is specified as a Toeplitz matrix with entries $\rho^{|i-j|}$. Marginal volatilities are set to unity, i.e., $\sigma_i = 1$ for all $i$. The process is discretized over $n = 23{,}400$ equally spaced time points, corresponding to one trading day of 6.5 hours at one-second intervals. Increments are generated via the Euler discretization scheme as Gaussian draws with covariance $\Sigma , \Delta t$, and their cumulative sums produce the continuous sample paths ${X_t}_{t \in [0,1]}$.

To replicate the practical structure of interval-valued high-frequency data, the simulated trajectories are aggregated into non-overlapping time blocks of 300 seconds (5 minutes) each, resulting in $n_{\text{blocks}} = 78$ intervals per trading day. Within each block, we demean the returns by subtracting the first observation of the block for each asset and then compute the cross-sectional minimum and maximum. This procedure yields the lower and upper bounds of interval-valued observations, which are subsequently used as the observed data for covariance estimation and further analysis.

\begin{table}[H]
\small
\centering
\caption{Frobenius norm errors across dimensions and correlation levels}
\label{tab:all_corr_errors}
\setlength{\tabcolsep}{3.5mm}{
\begin{tabular}{c|cccccc}
\hline
\multirow{2}{*}{Dimension ($p$)} & \multicolumn{6}{c}{Correlation ($\rho$)} \\
\cline{2-7}
& 0.1 & 0.2 & 0.3 & 0.4 & 0.5 & 0.6 \\
\hline
100 & 1.41 & 2.87 & 4.42 & 6.14 & 8.11 & 10.52 \\
200 & 2.00 & 4.07 & 6.27 & 8.70 & 11.51 & 14.94 \\
300 & 2.46 & 4.99 & 7.69 & 10.67 & 14.11 & 18.32 \\
400 & 2.84 & 5.77 & 8.88 & 12.33 & 16.30 & 21.17 \\
500 & 3.17 & 6.45 & 9.93 & 13.78 & 18.23 & 23.68 \\
\hline
\end{tabular}
}
\vspace{2mm}
\begin{minipage}{0.9\textwidth}
\footnotesize
\textit{Notes}: This table reports the average Frobenius norm errors of the estimated covariance matrices across different dimensions ($p$) and correlation levels ($\rho$). 
\end{minipage}
\end{table}

We evaluate the performance of the proposed method across different dimensions, $p \in \{100,200,300,400,500\}$, and correlation levels, $\rho \in \{0.1,0.2,0.3,0.4,0.5,0.6\}$, with the results summarized in Table \ref{tab:all_corr_errors}. As shown, the estimation error, measured by the Frobenius norm, increases with both $p$ and $\rho$, reflecting the expected challenges: higher dimensionality increases parameter uncertainty, while stronger correlations amplify the difficulty of accurately estimating the covariance structure. For a fixed $\rho$, the error grows roughly linearly with $p$, indicating that the proposed estimator scales reasonably well with dimensionality. Across correlation levels, the Frobenius norm error increases more rapidly as $\rho$ rises, highlighting the additional challenge posed by stronger dependence among variables. Nevertheless, the error growth remains smooth and well-behaved across all settings, demonstrating the numerical stability of the proposed method even in high-dimensional, highly correlated scenarios.

\section{Real Data Applications}  \label{sec5}

In this section, we apply the proposed methodology to real data to evaluate its effectiveness. Specifically, we analyze intraday high-frequency financial data for the constituents of the CSI 300 Index in the Chinese stock market. The rapid expansion in the availability of high-frequency data over the past two decades has generated an extensive literature, particularly on volatility estimation. Prominent examples include the two-scale realized volatility estimator \citep{zhang2005tale}, the multi-scale realized volatility estimator \citep{zhang2006efficient}, the realized kernel \citep{barndorff2008designing}, the pre-averaging approach \citep{jacod2009microstructure}, and the quasi-maximum likelihood estimator \citep{xiu2010quasi}. However, most of these methods rely exclusively on the return process and overlook additional information, such as the highest and lowest prices within each sampling interval. In practice, interval-valued high-frequency data are widely available, typically in the form of candlestick data, which record the open, high, low, and close prices for each interval (e.g., 5 minutes). Motivated by this, we explore the estimation of high-dimensional covariance matrices using interval-valued high-frequency financial data.

According to the fundamental theorem of asset pricing, the efficient price process for the $q$-th financial asset follows a continuous-time It\^{o} semi-martingale defined on the filtered probability space $(\Omega, {\cal F}, {\cal F}_t, \mathbb{P})$, aligning with the no-arbitrage condition established by \cite{DelbaenSchachermayer1994}. So that without loss of generality, we posit that the observed price processes $\{Y^{(q)}_{t}\}_{t\geq 0}$ follow the stochastic differential equations
\begin{eqnarray}\label{model}
&&Y^{(q)}_t=Y^{(q)}_0+\int_0^t b^{(q)}_sds+\int_0^t \sigma^{(q)}_s dW^{(q)}_s,
\end{eqnarray}
where $(b^{(q)}_t)_{t\in[0,1]}$ and $(\sigma^{(q)}_t)_{t\in[0,1]}$ are predictable and adapted processes, $W^{(q)}_t$ are standard Brownian motions, and $\rho_t$ is the correlation process with $d<W^{(i)},W^{(j)}>_t=\rho^{ij}_tdt$, where $\rho^{ij}_t\in[-1, 1]$ for $t\geq 0$, almost surely. The time interval is normalized to $[0,1]$ to represent a single trading day. Notably, the increments of high-frequency prices (i.e., returns) exhibit conditional independence, a property stemming from the Itô semimartingale framework. This justifies the use of the proposed method.

For each stock, we collect interval-valued high-frequency data at 5-minute intervals. Our objective is to estimate the high-dimensional volatility matrix $(\Bar{\rho}^{ij}\Bar{\sigma}^{(i)}\Bar{\sigma}^{(j)})_{i,j}$, defined as the average of $(\rho^{ij}\sigma^{(i)}\sigma^{(j)})_{i,j}$. The dataset consists of $298$ individual stocks, and we estimate the volatility matrix for two trading days, October 8 and 9, 2024. This sample period is chosen to capture market dynamics during episodes of extreme turbulence. On October 8, 2024, following the week-long National Day holiday, the Chinese stock market staged a sharp rebound, with the CSI 300 Index surging by nearly 6\% amid record trading volumes. The rally was fueled by renewed investor optimism over potential policy support, particularly for the property sector, although gains moderated later in the day as official announcements provided few concrete new measures. The next day, October 9, 2024, the market reversed dramatically, with the CSI 300 Index falling by more than 7\% in its largest single-day decline since the COVID-19 outbreak. The selloff reflected broad-based investor disappointment at the absence of substantial stimulus announcements, leading to profit-taking and widespread sectoral declines. Taken together, these two trading days illustrate the volatility of market sentiment, with October 8 characterized by optimism-driven dispersion in returns and October 9 by a synchronized downturn marked by stronger cross-stock comovement.

\begin{figure}[H]
\centering
\includegraphics[width=0.9\linewidth]{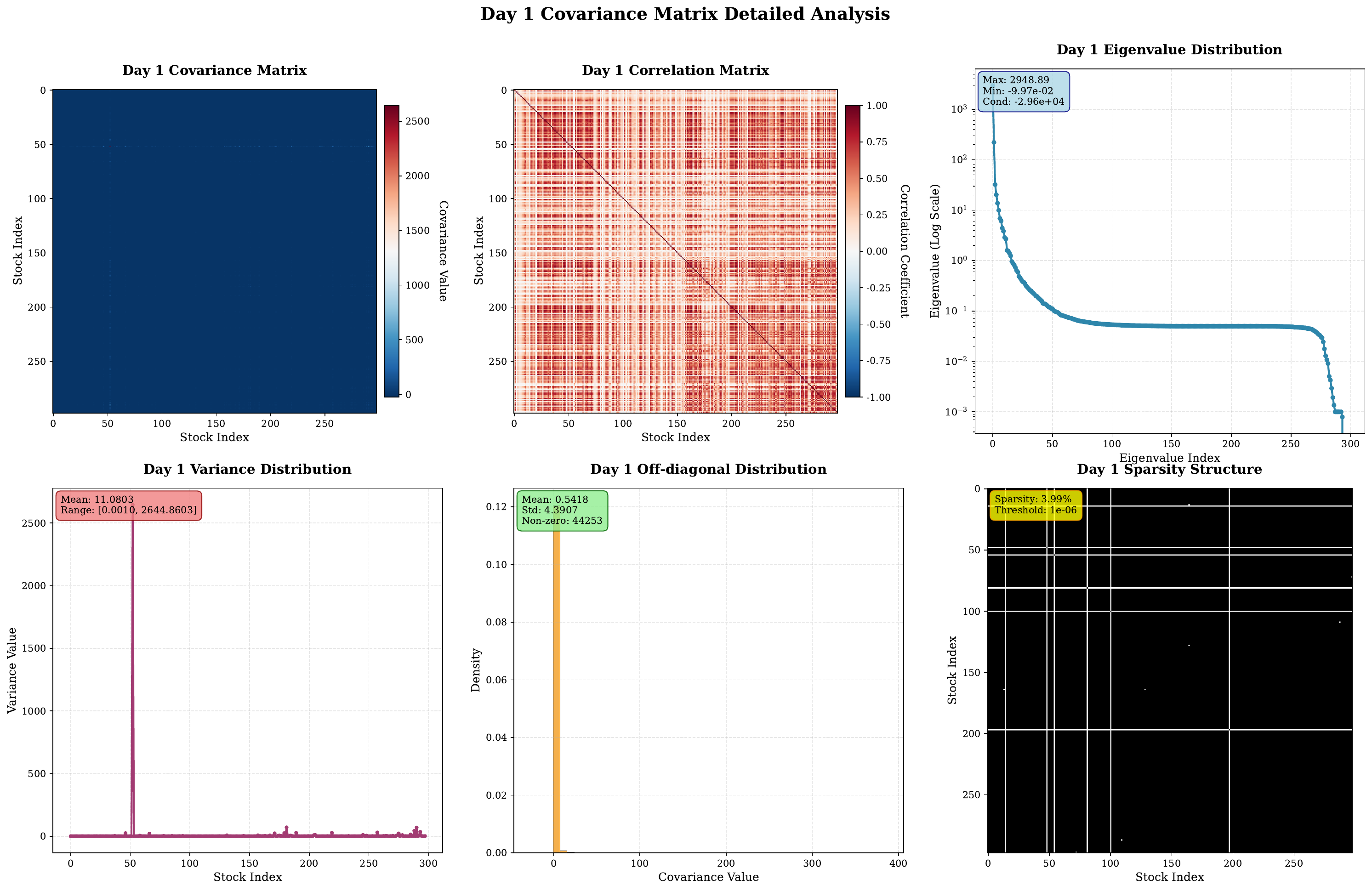}
\caption{The figure presents the estimated covariance matrix as of October 8, 2024.}
\label{fig1}
\end{figure}

The estimation results are reported in Figures \ref{fig1} and \ref{fig2}, where the former corresponds to October 8, 2024 (a booming market day) and the latter to October 9, 2024 (a declining market day). On October 8, 2024, when the Chinese stock market surged following the National Day holiday, the estimated volatility matrix of the CSI 300 constituents exhibited both higher variance dispersion and a pronounced market factor. The largest eigenvalue was considerably larger relative to the subsequent day, indicating that a dominant common component explained a substantial portion of return variation. At the same time, the distribution of variances displayed a wide range, suggesting that certain stocks or sectors experienced particularly large idiosyncratic movements during the rally. Consistent with this, the average pairwise correlation was moderate, implying that while the market factor was strong, cross-sectional heterogeneity in stock behavior contributed to the overall structure of risk. 

\begin{figure}[H]
\centering
\includegraphics[width=0.9\linewidth]{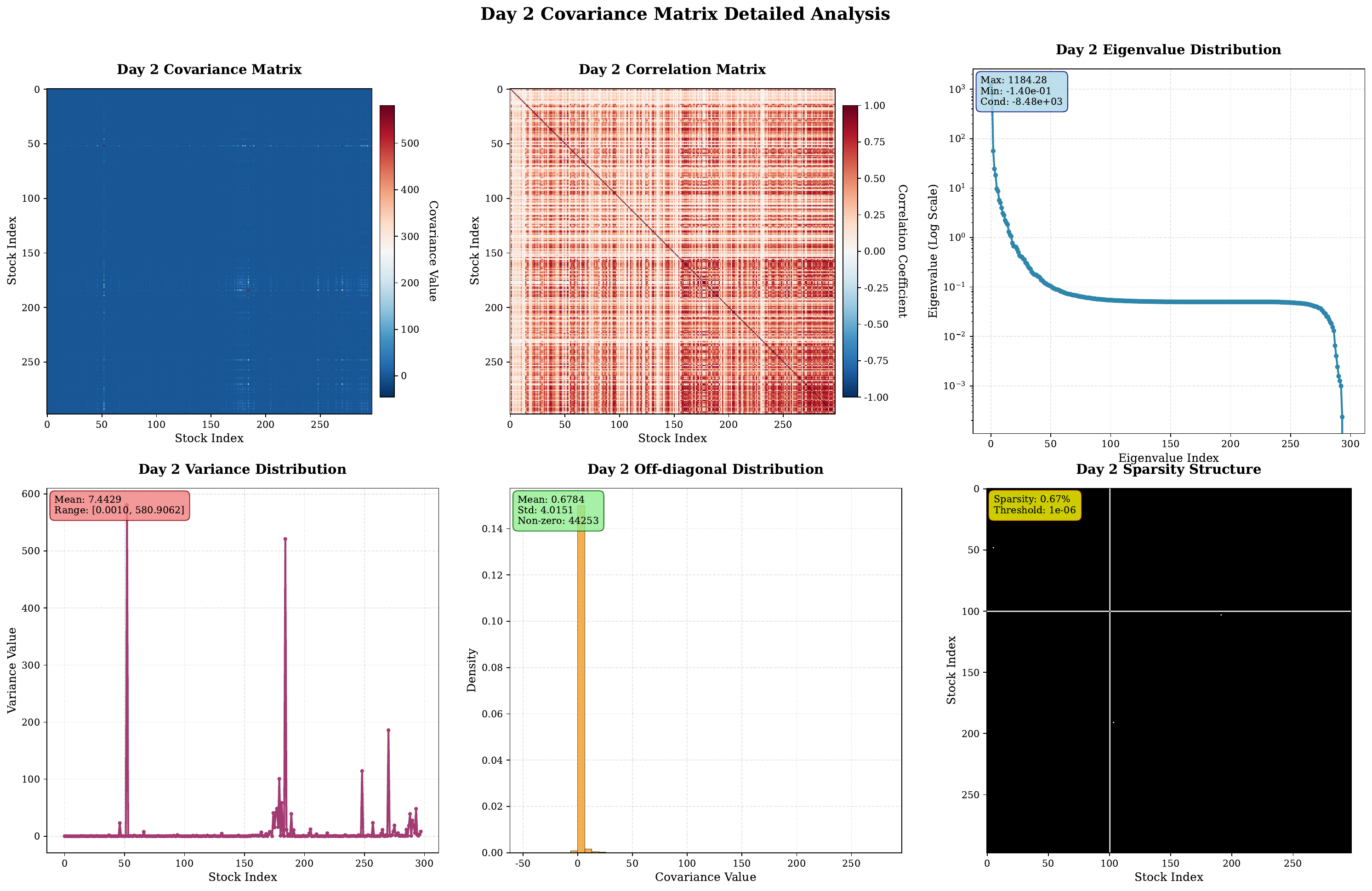}
\caption{The figure presents the estimated covariance matrix as of October 9, 2024.}
\label{fig2}
\end{figure}

By contrast, on October 9, 2024, when the market reversed sharply, the estimated volatility matrix revealed a different pattern. Although the overall level of variances was lower and the leading eigenvalue smaller, the average off-diagonal correlations were markedly higher. This finding suggests that stocks moved more cohesively during the downturn, with comovement intensifying across sectors. Such a pattern is consistent with the well-documented phenomenon that correlations rise in periods of market stress. Together, the results highlight a sharp contrast: October 8 was characterized by optimism-driven dispersion and concentrated sectoral dynamics, whereas October 9 reflected a synchronized market-wide decline with stronger systemic comovement. In summary, by leveraging interval-valued high-frequency data, we obtain economically meaningful conclusions that exploit the additional information contained in the range of the observed price process.

\section{Conclusion} \label{sec6}

In this paper, under the assumption that the upper and lower bounds of intervals exhibited the same dependence structure, we proposed a covariance matrix estimator for high-dimensional interval-valued data that guaranteed positive definiteness. We developed an efficient iterative algorithm based on the ADMM for computing this estimator and theoretically established the convergence of the optimization procedure. We further investigated the statistical properties of the estimator. In particular, under polynomial and exponential tail conditions on the interval upper and lower bounds, we derived high-probability convergence bounds for the estimator. Through extensive simulations across various DGPs and covariance structure assumptions, we demonstrated the estimator's strong finite-sample performance. Additionally, we applied the proposed estimator to high-frequency trading data from the CSI 300 Index, yielding several insightful empirical findings. In future work, we planned to extend the assumption of identical dependence structures between interval upper and lower bounds to the estimation of high-dimensional precision matrices (inverses of covariance matrices) for high-dimensional interval-valued data, thereby enhancing the utilization of the level and volatility information contained in financial datasets.

%

\newpage 
\begin{appendix}
\section{Proofs for Results}\label{appendix}
\subsection{Proof of Lemma \ref{auxiliarylemma}}
\begin{proof}
Given that \(\Sigma^*\) and \(\Gamma^*\) are the optimal solutions of the optimization problem \ref{consensusIST}, and \(\Lambda^*\) is the corresponding optimal dual variable, according to the Karush–Kuhn–Tucker (KKT) conditions, the following conclusion holds:
\begin{align}
\frac{1}{\lambda} (\Lambda^* -2\Sigma^* + S^l+S^u)_{ij} &\in \partial |\Sigma_{ij}^*|, \ i,j =1,2,\cdots,p,\  i\neq j, \label{kkta1}\\
(2\Sigma^* - S^l-S^u)_{ii} - \Lambda^*_{ii} &= 0, \ i = 1,2,\cdots,p,\label{kkta2}\\
\Sigma^* &= \Gamma^*, \label{kkta3}\\
\Gamma^* & \succeq \epsilon I, \label{kkta4}\\\
\langle \Lambda^*, \Gamma^* - \Gamma \rangle &\geq 0, \ \forall \Gamma \succeq \epsilon I.\label{kkta5}\
\end{align}

(\ref{kkta1}) states that for the off-diagonal elements of \(\Sigma^*\), the gradient (subdifferential) of \(|\Sigma_{ij}^*|\) must satisfy a specific relationship with the Lagrange multipliers \(\Lambda^*\) and the matrices \(S^l\), \(S^u\). The subdifferential \(\partial |\Sigma_{ij}^*|\) depends on whether \(\Sigma_{ij}^*\) is zero or non-zero. (\ref{kkta2}) implies that for the diagonal elements of \(\Sigma^*\), the difference between the sum of \(S^l\) and \(S^u\), and the Lagrange multiplier \(\Lambda^*\), must be zero. (\ref{kkta3}) enforces the equality between \(\Sigma^*\) and \(\Gamma^*\), in accordance with the problem's constraints. (\ref{kkta4}) ensures that \(\Gamma^*\) is positive semidefinite, with its eigenvalues greater than or equal to \(\epsilon\). (\ref{kkta5}) expresses the complementary slackness condition, which ensures that if \(\Gamma^*\) is not at its lower bound (\(\epsilon I\)), the corresponding Lagrange multiplier \(\Lambda^*\) must be zero. Conversely, if \(\Gamma^*\) is at its lower bound, \(\Lambda^*\) may be positive.

We then discuss the optimality conditions for \(\Sigma \text{ step}\) (\ref{Sigmastep}) and \(\Gamma \text{ step}\) (\ref{Gammastep}) separately. The optimality conditions of the optimization problem (\ref{Sigmastep}) with respect to \(\Sigma\) are:
\begin{align}
0 \in  (2\Sigma^{(q+1)} - S^l - S^u)_{ij}+ \lambda \partial |\Sigma^{(q+1)}_{ij}| - \Lambda^{(q)} &+ \frac{1}{\beta} (\Sigma^{(q+1)} - \Gamma^{(q)})_{ij},\ i,j =1,2,\cdots,p,\  i\neq j, \label{sigmaopt1}\\
(2\Sigma^{(q+1)} - S^l - S^u)_{ii} - \Lambda^{(q)} &+ \frac{1}{\beta} (\Sigma^{(q+1)} - \Gamma^{(q)})_{ii} = 0,\ i =1,2,\cdots,p.\label{sigmaopt2}\\
\end{align}

Using the fact that
\begin{equation} \label{lambdaupdate}
\Lambda^{(q+1)} = \Lambda^{(q)} - \frac{1}{\beta}(\Sigma^{(q+1)} - \Gamma^{(q+1)}),
\end{equation}

(\ref{sigmaopt1}) and (\ref{sigmaopt2}) can be, respectively, rewritten as
\begin{align}
\frac{1}{\lambda}(\Lambda^{(q+1)} -2\Sigma^{(q+1)} + S^l + S^u)_{ij} &\in \partial |\Sigma^{(q+1)}_{ij}|,\ i,j =1,2,\cdots,p,\  i\neq j, \label{sigmaopt11}\\
(2\Sigma^{(q+1)} - S^l - S^u)_{ii} - \Lambda^{(q+1)} & = 0,\ i =1,2,\cdots,p.\label{sigmaopt22}
\end{align}

Due to the monotonicity of the subdifferential \(\partial |\cdot|\), combined with (\ref{kkta1}), (\ref{kkta2}), (\ref{sigmaopt11}), (\ref{sigmaopt22}), we have
\begin{equation}\label{monotonicityres1}
\langle \Sigma^{(q+1)} - \Sigma^*, (\Lambda^* - \Lambda^{(q+1)}) + 2(\Sigma^* - \Sigma^{(q+1)}) \rangle \geq 0.
\end{equation}

The optimality conditions of the optimization problem (\ref{Gammastep}) with respect to \(\Gamma\) is:
\begin{equation}\label{Gammaopt}
\langle \Lambda^{(q)} + \frac{1}{\beta} (\Gamma^{(q+1)} -\Sigma^{(q+1)} ),\Gamma - \Gamma^{(q+1)}\rangle \leq 0, \ \forall \Gamma \succeq \epsilon I. 
\end{equation}

Combining (\ref{lambdaupdate}) and (\ref{Gammaopt}), we have
\begin{equation}\label{Gammaopt1}
\langle \Lambda^{(q+1)} + \frac{2}{\beta} (\Sigma^{(q+1)} -\Sigma^{(q)} ),\Gamma - \Gamma^{(q+1)}\rangle \leq 0, \ \forall \Gamma \succeq \epsilon I. 
\end{equation}

Let \(\Gamma\) in (\ref{kkta5}) be \(\Gamma^{(q+1)}\), and \(\Gamma\) in (\ref{Gammaopt1}) be \(\Gamma^*\), we have
\begin{align}
\langle \Lambda^*, \Gamma^* - \Gamma^{(q+1)} \rangle &\geq 0,\\
\langle \Lambda^{(q+1)} + \frac{2}{\beta} (\Sigma^{(q+1)} -\Sigma^{(q)}),&\Gamma^* - \Gamma^{(q+1)}\rangle \leq 0
\end{align}
thus, we have
\begin{equation}\label{Gammafinal}
\langle \Lambda^*- \Lambda^{(q+1)} + \frac{2}{\beta} (\Sigma^{(q+1)} -\Sigma^{(q)}), \Gamma^* - \Gamma^{(q+1)} \rangle \geq 0.
\end{equation}

Combining (\ref{monotonicityres1}) and (\ref{Gammafinal}), we have
\begin{equation}\label{comine1}
\begin{aligned}
\lVert\Sigma^{(q+1)} - \Sigma^* \rVert^2_F  &\leq \langle \Sigma^{(q+1)} - \Sigma^*, \Lambda^* - \Lambda^{(q+1)}\rangle + \langle\Gamma^* - \Gamma^{(q+1)},\Lambda^* - \Lambda^{(q+1)} \rangle\\
& + \frac{2}{\beta} \langle\Gamma^* - \Gamma^{(q+1)},\Sigma^{(q+1)} -\Sigma^{(q)} \rangle\\
&\leq  \langle \Sigma^{(q+1)} - \Sigma^*, \Lambda^* - \Lambda^{(q+1)}\rangle + \langle\Sigma^* - \Sigma^{(q+1)}- \beta (\Lambda^{(q+1)} - \Lambda^{(q)}),\Lambda^* - \Lambda^{(q+1)} \rangle\\
& + \frac{2}{\beta} \langle \Sigma^* - \Sigma^{(q+1)}- \beta (\Lambda^{(q+1)} - \Lambda^{(q)}),\Sigma^{(q+1)} -\Sigma^{(q)} \rangle,
\end{aligned}
\end{equation}
where the second inequality uses conclusions \(\Gamma^{(q+1)} = \beta (\Lambda^{(q+1)} - \Lambda^{(q)}) + \Sigma^{(q+1)}\) and \(\Gamma^* =\Sigma^* \). After simple algebraic operations, we obtain from (\ref{comine1}):
\begin{equation}\label{comine2}
\begin{aligned}
\lVert\Sigma^{(q+1)} - \Sigma^* \rVert^2_F - \langle \Lambda^{(q+1)} - \Lambda^{(q)},\Sigma^{(q+1)} -\Sigma^{(q)} \rangle &\leq \beta  \langle \Lambda^{(q+1)} - \Lambda^*, \Lambda^{(q)}-\Lambda^{(q+1)}\rangle \\
& + \frac{2}{\beta} \langle \Sigma^{(q+1)} - \Sigma^*,\Sigma^{(q)}- \Sigma^{(q+1)}\rangle.
\end{aligned}
\end{equation}

Furthermore, we have
\begin{equation} \label{comine3}
\begin{aligned}
\beta  &\langle \Lambda^{(q+1)} - \Lambda^*, \Lambda^{(q)}-\Lambda^{(q+1)}\rangle + \frac{2}{\beta} \langle \Sigma^{(q+1)} - \Sigma^*,\Sigma^{(q)}- \Sigma^{(q+1)}\rangle \\
& = \beta \langle \Lambda^{(q+1)} -\Lambda^{(q)} + \Lambda^{(q)}- \Lambda^*, \Lambda^{(q)}-\Lambda^{(q+1)}\rangle + \frac{2}{\beta} \langle \Sigma^{(q+1)} - \Sigma^{(q)} + \Sigma^{(q)} - \Sigma^*,\Sigma^{(q)}- \Sigma^{(q+1)}\rangle \\
& = -\beta \lVert \Lambda^{(q)}-\Lambda^{(q+1)} \rVert^2_F + \beta \langle \Lambda^{(q)}-\Lambda^*, \Lambda^{(q)} - \Lambda^{(q+1)}\rangle - \frac{2}{\beta} \lVert \Sigma^{(q)}-\Sigma^{(q+1)} \rVert^2_F\\
&+\frac{2}{\beta}\langle \Sigma^{(q)} - \Sigma^*,\Sigma^{(q)}- \Sigma^{(q+1)}\rangle\\
& \geq \lVert\Sigma^{(q+1)} - \Sigma^* \rVert^2_F - \langle \Lambda^{(q+1)} - \Lambda^{(q)},\Sigma^{(q+1)} -\Sigma^{(q)} \rangle,
\end{aligned}
\end{equation}
where the last inequality uses (\ref{comine1}). We then have
\begin{equation} \label{comine3}
\begin{aligned}
\beta \langle \Lambda^{(q)}-\Lambda^*, \Lambda^{(q)} +  \Lambda^{(q+1)}\rangle &+ \frac{2}{\beta}\langle \Sigma^{(q)} - \Sigma^*,\Sigma^{(q)}- \Sigma^{(q+1)}\rangle \geq \beta \lVert \Lambda^{(q)}-\Lambda^{(q+1)} \rVert^2_F + \frac{2}{\beta} \lVert \Sigma^{(q)}-\Sigma^{(q+1)} \rVert^2_F\\
&  + \lVert\Sigma^{(q+1)} - \Sigma^* \rVert^2_F - \langle \Lambda^{(q+1)} - \Lambda^{(q)},\Sigma^{(q+1)} -\Sigma^{(q)} \rangle.
\end{aligned}
\end{equation}

Using the definitions of \(U^*, U^{(q)}\) and \(D\), (\ref{comine3}) can be rewritten as
\begin{equation} \label{comine4}
\langle U^{(q)} - U^*, U^{(q)} - U^{(q+1)} \rangle_D \geq \lVert U^{(q)} - U^{(q+1)} \rVert^2_D + \lVert\Sigma^{(q+1)} - \Sigma^* \rVert^2_F - \langle \Lambda^{(q+1)} - \Lambda^{(q)},\Sigma^{(q+1)} -\Sigma^{(q)} \rangle.
\end{equation}

According to the definition of \(\lVert \cdot\rVert^2_D\), we have
\begin{equation}\label{comine5}
\begin{aligned}
\lVert U^{(q+1)} - U^*\rVert^2_D & = \lVert U^{(q+1)} -U^{(q)} + U^{(q)} - U^*\rVert^2_D\\
& = \lVert U^{(q+1)} -U^{(q)}\rVert^2_D + \lVert U^{(q)} - U^*\rVert^2_D - 2 \langle U^{(q)} - U^{(q+1)},  U^{(q)} - U^* \rangle, 
\end{aligned}
\end{equation}
combining it with (\ref{comine4}), we have
\begin{equation}\label{comine6}
\begin{aligned}
\lVert U^{(q)} - U^*\rVert^2_D &- \lVert U^{(q+1)} - U^*\rVert^2_D = 2 \langle U^{(q)} - U^{(q+1)},  U^{(q)} - U^* \rangle - \lVert U^{(q+1)} -U^{(q)}\rVert^2_D\\
& \geq 2\lVert U^{(q)} - U^{(q+1)} \rVert^2_D + 2\lVert\Sigma^{(q+1)} - \Sigma^* \rVert^2_F - 2\langle \Lambda^{(q+1)} - \Lambda^{(q)},\Sigma^{(q+1)} -\Sigma^{(q)} \rangle\\
& - \lVert U^{(q+1)} -U^{(q)}\rVert^2_D \\
& =  \lVert U^{(q)} - U^{(q+1)} \rVert^2_D + 2\lVert\Sigma^{(q+1)} - \Sigma^* \rVert^2_F - 2\langle \Lambda^{(q+1)} - \Lambda^{(q)},\Sigma^{(q+1)} -\Sigma^{(q)} \rangle.\\
\end{aligned}
\end{equation}

It is clear that (\ref{sigmaopt11}) and (\ref{sigmaopt22}) also hold for the \(q\)-th iteration, that is,
\begin{align}
\frac{1}{\lambda}(\Lambda^{(q)} -2\Sigma^{(q)} + S^l + S^u)_{ij} &\in \partial |\Sigma^{(q)}_{ij}|,\ i,j =1,2,\cdots,p,\  i\neq j, \label{sigmaopt11q}\\
(2\Sigma^{(q)} - S^l - S^u)_{ii} - \Lambda^{(q)} & = 0,\ i =1,2,\cdots,p.\label{sigmaopt22q}
\end{align}

Combining (\ref{sigmaopt11}), (\ref{sigmaopt22}), (\ref{sigmaopt11q}), (\ref{sigmaopt22q}), and using the monotonicity of \(\partial |\cdot|\), we can obtain
\[
2\lVert\Sigma^{(q+1)} - \Sigma^* \rVert^2_F - 2\langle \Lambda^{(q+1)} - \Lambda^{(q)},\Sigma^{(q+1)} -\Sigma^{(q)} \rangle \geq 0,
\]
that is
\[
\lVert U^{(q)} - U^*\rVert^2_D - \lVert U^{(q+1)} - U^*\rVert^2_D \geq \lVert U^{(q)} - U^{(q+1)} \rVert^2_D,
\]
which completes the proof.
\end{proof}

\subsection{Proof of Theorem \ref{maintheorem}}
\begin{proof}
We first prove the three conclusions.
(a) From Lemma \ref{auxiliarylemma}, it follows that as the sequence progressively approaches the optimal solution, the distance between consecutive iterations gradually decrease and eventually tend to zero. Therefore, \(\| U^{(q)} - U^{(q+1)} \|_D \to 0\) holds. (b) Since the sequence \((\Sigma^{(q)}, \Gamma^{(q)}, \Lambda^{(q)})\) is convergent (as previously stated), it must be bounded, and thus it must lie within a compact region. (c) Algorithm \ref{ADMMalgorithm} generally cause the value of the objective function or the measure of distance to the optimal solution to decrease gradually. In this case, the squared distance to the optimal solution \(\|U^{(q)} - U^*\|_D^2\) is monotonically decreasing.

From the definition of \(U\) and conclusion (a), we have \(\Sigma^{(q)} -\Sigma^{(q+1)}\to 0\) and \(\Lambda^{(q)} -\Lambda^{(q+1)}\to 0\). From the update formula (\ref{lambdaupdate}) of \(\Lambda^{(q+1)}\), we further obtain
\[
\Sigma^{(q)} - \Gamma^{(q)} \to 0, \ \Gamma^{(q+1)}-\Gamma^{(q)} \to 0.
\]

From conclusion (b), we can deduce that for the sequence \((U^{(q)})\), there exists a subsequence \((U^{(q_k)})\) that converges to \(\widetilde{U} = (\widetilde{\Lambda}, \widetilde{\Sigma})^\top\), i.e., \(\Lambda^{(q_k)}\to \widetilde{\Lambda}, \Sigma^{(q_k)}\to \widetilde{\Sigma} \). Furthermore, from \(\Sigma^{(q)} - \Gamma^{(q)} \to 0\), we have \(\Gamma^{(q_k)}\to \widetilde{\Gamma} \coloneqq \widetilde{\Sigma}\). Therefore, \((\widetilde{\Sigma}, \widetilde{\Gamma}, \widetilde{\Lambda})\) is a limit point of the sequence \((\Sigma^{(q)}, \Gamma^{(q)}, \Lambda^{(q)})\).

From (\ref{sigmaopt11}) and (\ref{sigmaopt22}), we have
\begin{align}
\frac{1}{\lambda}(\widetilde{\Lambda} -2\widetilde{\Sigma} + S^l + S^u)_{ij} &\in \partial |\widetilde{\Sigma}_{ij}|,\ i,j =1,2,\cdots,p,\  i\neq j, \label{sigmaopt11t2}\\
(2\widetilde{\Sigma} - S^l - S^u)_{ii} - \widetilde{\Lambda} & = 0,\ i =1,2,\cdots,p,\label{sigmaopt22t2}
\end{align}
and 
\[
\langle \widetilde{\Lambda}, \widetilde{\Gamma} - \Gamma \rangle \geq 0, \ \forall \Gamma \succeq \epsilon I,
\]
these together prove that \((\widetilde{\Sigma}, \widetilde{\Gamma}, \widetilde{\Lambda})\) is the optimal solution to optimization problem (\ref{consensusIST}), thus completing the proof.
\end{proof}

\subsection{Proof of Theorem \ref{nasymptoticth1}}
\begin{proof}
Define \(\Delta = \Sigma - \Sigma^0\). In this case, we can rewrite optimization problem (\ref{IST}) in the form of \(\Delta\) as
\[
\phi(\Delta) = \frac{1}{2}\lVert \Delta + \Sigma^0 - S^l\rVert^2_F + \frac{1}{2}\lVert \Delta + \Sigma^0 - S^u\rVert^2_F + \lambda \lVert \Delta + \Sigma^0\rVert,
\]
we then have 
\[
\Delta^* = \argmin_{\Delta} \phi(\Delta),
\]
where \(\Delta\) satisfies \(\Delta = \Delta^\top, \Delta + \Sigma^0 \succeq \epsilon I\). Note that, since \(\Sigma^0\) is positive definite, we can always find \(\epsilon\) greater than the smallest eigenvalue of \(\Sigma^0\). We can also easily obtain \(\Delta^* = \Sigma^* - \Sigma^0\). We now constrain \(\Delta\) to belong to the set
\[
\{\Delta:\Delta = \Delta^\top, \Delta + \Sigma^0 \succeq \epsilon I, \lVert \Delta \rVert_F = 5\lambda \alpha^{\frac{1}{2}}\}.
\]

Under probability event \(\{|s^l_{ij} - \sigma^0_{ij}| \vee |s^u_{ij} - \sigma^0_{ij}| \leq \lambda, \forall (i, j)\}\), we have:
\[
\begin{aligned}
\phi(\Delta) - \phi(0) & = \frac{1}{2}\lVert \Delta + \Sigma^0 - S^l\rVert^2_F - \frac{1}{2}\lVert\Sigma^0 - S^l\rVert^2_F + \frac{1}{2}\lVert \Delta + \Sigma^0 - S^u\rVert^2_F -\frac{1}{2}\lVert + \Sigma^0 - S^u\rVert^2_F \\
& +  \lambda \lVert \Delta + \Sigma^0\rVert -  \lambda \lVert\Sigma^0\rVert\\
&= \|\Delta\|_F^2 + \langle \Delta, \Sigma^0 - S^l\rangle + \langle \Delta, \Sigma^0 - S^u\rangle + \lambda \lVert \Delta + \Sigma^0\rVert -  \lambda \lVert\Sigma^0\rVert \\
&= \|\Delta\|_F^2 + \langle \Delta, 2\Sigma^0 - S^l-S^u\rangle + \lambda \lVert \Delta + \Sigma^0\rVert -  \lambda \lVert\Sigma^0\rVert \\
& = \|\Delta\|_F^2 + \langle \Delta, 2\Sigma^0 - S^l-S^u\rangle + \lambda \lVert \Delta_{\gA^c}\rVert + \lambda (\lVert \Delta_\gA + \Sigma^0_\gA\rVert - \lambda \lVert\Sigma^0_\gA \rVert)\\
& \geq \|\Delta\|_F^2 - 2\lambda (\lVert\Delta\rVert + \sum_{i=1}^{p} |\Delta_{ii}|) + \lambda \lVert \Delta_{\gA^c}\rVert - \lambda \lVert \Delta_{\gA}\rVert\\
& \geq \|\Delta\|_F^2 - 3\lambda (\lVert\Delta\rVert + \sum_{i=1}^{p} |\Delta_{ii}|)\\
& \geq \|\Delta\|_F^2 - 3\lambda (\alpha + p)^{\frac{1}{2}}\|\Delta\|_F\\
& \geq 10 \lambda^2(s+p) \geq 0,
\end{aligned}
\]
where the fourth equality utilizes the fact that \(\lambda \lVert\Sigma^0\rVert = \lambda \lVert\Sigma^0_\gA \rVert\) and \(\lambda \lVert \Delta + \Sigma^0\rVert = \lambda \lVert \Delta_{\gA^c}\rVert + \lambda \lVert \Delta_\gA + \Sigma^0_\gA\rVert\). The first inequality utilizes the fact that \(|(2\Sigma^0 - S^l-S^u)_{ij}|\leq |(\Sigma^0 - S^l)_{ij}| + |(\Sigma^0 -S^u)_{ij}|\leq 2\lambda\). The third inequality utilizes the inequality relationship between the two matrix norms.

By defining \(\varphi(\Delta) = \phi(\Delta) -\phi(0)\), it follows that \(\Delta^*\) is also the optimal solution to the following convex optimization problem:
\[
\Delta^* = \argmin_{\Delta} \varphi(\Delta),
\]
where \(\Delta\) satisfies \(\Delta = \Delta^\top, \Delta + \Sigma^0 \succeq \epsilon I\). Under the some probability event \(\{|s^l_{ij} - \sigma^0_{ij}| \vee |s^u_{ij} - \sigma^0_{ij}| \leq \lambda, \forall (i, j)\}\), we always have \(\|\Delta^*\|_F \leq 5\lambda (\alpha + p)^{\frac{1}{2}}\). Otherwise, the fact that \(\varphi(\Delta) > 0\) (under \(\|\Delta^*\|_F \leq 5\lambda (\alpha + p)^{\frac{1}{2}}\)) would contradict the convexity of \(\varphi(\Delta)\) and \(\varphi(\Delta^*) \leq \varphi(0)=0 \). Therefore, we have, 
\[
\begin{aligned}
\PP\left(\lVert\Sigma^* - \Sigma^0 \rVert_F \leq 5\lambda (\alpha + p)^{\frac{1}{2}}\right) &= \PP\left(\lVert\Delta^*\rVert_F \leq 5\lambda (\alpha + p)^{\frac{1}{2}}\right) \\
& \geq 1 - \PP\left(\max_{i,j} (|s^l_{ij} - \sigma^0_{ij}| \vee |s^u_{ij} - \sigma^0_{ij}|) > \lambda\right),
\end{aligned}
\]
which completes the proof. 
\end{proof}

\subsection{Proof of Theorem \ref{nasymptoticth2}}

\begin{proof}
Let \(\lambda = \tau^2_0 + \tau_1\), and the specific expressions for \(\tau_0\) and \(\tau_1\) will be provided later.
We first have
\[
\begin{aligned}
s^l_{ij} - \sigma^0_{ij} & = \left(
\frac{1}{n}\sum_{k=1}^{n} Y^l_{ki} Y^l_{kj} -\sigma^0_{ij} \right) - \left(\frac{1}{n}\sum_{k=1}^{n} Y^l_{ki}\right)\left(\frac{1}{n}\sum_{k=1}^{n} Y^l_{kj}\right),\\
s^u_{ij} - \sigma^0_{ij} & = \left(
\frac{1}{n}\sum_{k=1}^{n} Y^u_{ki} Y^u_{kj} -\sigma^0_{ij} \right) - \left(\frac{1}{n}\sum_{k=1}^{n} Y^u_{ki}\right)\left(\frac{1}{n}\sum_{k=1}^{n} Y^u_{kj}\right).\\
\end{aligned}
\]

According to Theorem \ref{nasymptoticth1}, we need to provide an upper bound for the probability of event \(\max_{i,j} (|s^l_{ij} - \sigma^0_{ij}| \vee |s^u_{ij} - \sigma^0_{ij}|) > \lambda\). Therefore, we have
\begin{equation} \label{allstart}
\begin{aligned}
\PP&\left(\max_{i,j} (|s^l_{ij} - \sigma^0_{ij}| \vee |s^u_{ij} - \sigma^0_{ij}|) > \lambda\right)  \leq \PP\left(\max_{i,j} |s^l_{ij} - \sigma^0_{ij}| > \lambda\right) + \PP\left(\max_{i,j} |s^u_{ij} - \sigma^0_{ij}| > \lambda\right)\\
& \leq p^2\left(
\PP\left(\sum_{k=1}^{n} Y^l_{ki} Y^l_{kj} > n (\sigma^0_{ij} + \tau_1) \right) \right. + \left. \PP\left(\sum_{k=1}^{n} Y^u_{ki} Y^u_{kj} > n (\sigma^0_{ij} + \tau_1)
\right)
\right)\\
& + 2p\left(
\PP\left(\sum_{k=1}^{n} Y^l_{kj} > n \tau_0\right) + \PP\left(\sum_{k=1}^{n} Y^u_{kj} > n \tau_0\right)
\right),
\end{aligned}
\end{equation}
where the second inequality comes from \citet{xue2012positive}. We now discuss the four probabilities mentioned above. Using Markov inequality \citep{zhang2023mathematical}, we have
\begin{equation}\label{exp1}
\begin{aligned}
\PP\left(\sum_{k=1}^{n} Y^l_{kj} > n \tau_0\right)  & = \PP\left(\exp\left( t_0\sum_{k=1}^{n} Y^l_{kj}\right) > \exp\left(n t_0\tau_0\right)\right) \leq \exp(-n t_0\tau_0) \E\left(\exp\left( t_0\sum_{k=1}^{n} Y^l_{kj}\right)\right)\\
& = \exp(-n t_0\tau_0) \E\left(\prod_{k=1}^{n} \exp(t_0Y^l_{kj})\right) = \exp(-n t_0\tau_0) \left(\prod_{k=1}^{n} \E(\exp(t_0Y^l_{kj}))\right)\\
& \leq \exp(-n t_0\tau_0)\left(\prod_{k=1}^{n} \E(1 + t_0Y^l_{kj} + \frac{1}{2}(t_0Y^l_{kj})^2 \exp(|t_0Y^l_{kj}|))\right)\\
& \leq \exp(-n t_0\tau_0)\left(\prod_{k=1}^{n} \E(1 + t_0|Y^l_{kj}| + \frac{1}{2}(t_0Y^l_{kj})^2 \exp(|t_0Y^l_{kj}|))\right)\\
& \leq \exp(-n t_0\tau_0)\left(\prod_{k=1}^{n} \E(K_1 + \frac{1}{2}(t_0Y^l_{kj})^2 \exp(|t_0Y^l_{kj}|))\right)\\
& \leq \exp(-n t_0\tau_0)\left(\prod_{k=1}^{n} \exp(\E(K_1 \frac{1}{2}(t_0Y^l_{kj})^2 \exp(|t_0Y^l_{kj}|)))\right)\\ 
& =  \exp(-n t_0\tau_0) \exp\left(\frac{K_1}{2}\sum_{k=1}^{n} (\E((t_0Y^l_{kj})^2 \exp(|t_0Y^l_{kj}|)))\right)\\ 
& \leq \exp(-n t_0\tau_0) \exp\left(\frac{K_1}{2}\sum_{k=1}^{n} (\E(\exp((t_0Y^l_{kj})^2 + 1)))\right)\\
& \leq \exp(-n t_0\tau_0) \exp\left(\frac{1}{2} K_1n K_2e\right),\\
\end{aligned}
\end{equation}
where the second inequality uses the fact that \(\exp(x) \leq 1+x+\frac{1}{2}x^2 \exp(|x|)\), the fifth inequality uses the fact that \(a+x\leq \exp(ax)\) with \(a>0\), the sixth inequality uses the fact that \(x^2\exp(|x|) \leq \exp(x^2 +1)\), and the fourth and sixth inequalities utilize the exponential tail condition.

Let \(t_0 = (\eta \frac{\log p}{n})^{1/2}\). Based on the assumption \(n \geq \log p\), we derive \(t^2_0 \leq \eta\). Let \(\tau_0 = c_0 (\frac{\log p}{n})^{1/2}\), \(c_0= \frac{1}{\log p} (M+1 + \frac{1}{2} K_1n K_2e)\), we further have
\begin{equation}\label{exp11}
\begin{aligned}
\PP\left(\sum_{k=1}^{n} Y^l_{kj} > n \tau_0\right)  & \leq \exp(-n t_0\tau_0) \exp\left(\frac{1}{2} K_1n K_2e\right)\\
& = \exp\left(-n (\eta \frac{\log p}{n})^{1/2} c_0 (\frac{\log p}{n})^{1/2}\right) \exp\left(\frac{1}{2} K_1n K_2e\right)\\
& = \exp\left(-c_0 \log p\right)\exp\left(\frac{1}{2} K_1n K_2e\right) = p^{-M-1}.
\end{aligned}
\end{equation}	

Similarly, by further applying the Markov inequality, we obtain
\begin{equation}\label{exp2}
\begin{aligned}
\PP&\left(\sum_{k=1}^{n} Y^l_{ki} Y^l_{kj}-\sigma^0_{ij}  > n\tau_1 \right) =\PP\left(\exp\left(t_1\sum_{k=1}^{n} (Y^l_{ki} Y^l_{kj}-\sigma^0_{ij})\right)  > \exp(t_1 n\tau_1) \right)\\
& \leq  \exp(-t_1 n\tau_1) \E\left(\exp\left(t_1\sum_{k=1}^{n} (Y^l_{ki} Y^l_{kj}-\sigma^0_{ij})\right)\right)\\
& = \exp(-t_1 n\tau_1) \prod_{k=1}^{n} \E\left(\exp\left(t_1(Y^l_{ki} Y^l_{kj}-\sigma^0_{ij})\right)\right) \\
& \leq \exp(-t_1 n\tau_1) \prod_{k=1}^{n} \E\left(1 + t_1(Y^l_{ki} Y^l_{kj}-\sigma^0_{ij}) + \frac{1}{2} t^2_1(Y^l_{ki} Y^l_{kj}-\sigma^0_{ij})^2\exp\left(t_1|Y^l_{ki} Y^l_{kj}-\sigma^0_{ij}|\right)\right)\\
& =\exp(-t_1 n\tau_1) \prod_{k=1}^{n} \E\left(1 + t_1(Y^l_{ki} Y^l_{kj}-\E(Y^l_{ki})\E(Y^l_{kj}) -\sigma^0_{ij}) + t_1\E(Y^l_{ki})\E(Y^l_{kj})\right.\\
& \left. + \frac{1}{2} t^2_1(Y^l_{ki} Y^l_{kj}-\sigma^0_{ij})^2\exp\left(t_1|Y^l_{ki} Y^l_{kj}-\sigma^0_{ij}|\right)\right)\\
& =\exp(-t_1 n\tau_1) \prod_{k=1}^{n} \E\left(1 + t_1\E(Y^l_{ki})\E(Y^l_{kj}) + \frac{1}{2} t^2_1(Y^l_{ki} Y^l_{kj}-\sigma^0_{ij})^2\exp\left(t_1|Y^l_{ki} Y^l_{kj}-\sigma^0_{ij}|\right)\right)\\
& \leq \exp(-t_1 n\tau_1) \prod_{k=1}^{n} \E\left(K_3+ \frac{1}{2} t^2_1(Y^l_{ki} Y^l_{kj}-\sigma^0_{ij})^2\exp\left(t_1|Y^l_{ki} Y^l_{kj}-\sigma^0_{ij}|\right)\right)\\
& \leq \exp(-t_1 n\tau_1) \prod_{k=1}^{n} \E\left(\exp\left(
\frac{K_3}{2} t^2_1(Y^l_{ki} Y^l_{kj}-\sigma^0_{ij})^2\exp\left(t_1|Y^l_{ki} Y^l_{kj}-\sigma^0_{ij}|\right)
\right)\right)\\
& = \exp(-t_1 n\tau_1) \exp\left(\frac{K_3t^2_1}{2} \sum_{k=1}^{n} \underbrace{\E \left(
(Y^l_{ki} Y^l_{kj}-\sigma^0_{ij})^2\exp\left(t_1|Y^l_{ki} Y^l_{kj}-\sigma^0_{ij}|\right)
\right)}_{Q_1} \right),\\
\end{aligned}
\end{equation}		
where at the third inequality, we assume the existence of \(K_3 >0\) such that \(1 + t_1\E(Y^l_{ki})\E(Y^l_{kj}) \leq K_3\). This can be easily derived from condition C1. For \(Q_1\), we have
\begin{equation} \label{exp3}
\begin{aligned}
Q_1 &= \E((Y^l_{ki} Y^l_{kj}-\sigma^0_{ij})^2 \exp(t_1|Y^l_{ki} Y^l_{kj}-\sigma^0_{ij}|) )\\
& = \E\left(((Y^l_{ki} Y^l_{kj})^2-2Y^l_{ki} Y^l_{kj}\sigma^0_{ij} + (\sigma^0_{ij})^2) 	\exp(t_1|Y^l_{ki} Y^l_{kj}-\sigma^0_{ij}|) \right)\\
& = \E\left(((Y^l_{ki} Y^l_{kj})^2-2Y^l_{ki} Y^l_{kj}\sigma^0_{ij} + 2\E(Y^l_{ki})\E(Y^l_{kj})\sigma^0_{ij} + (\sigma^0_{ij})^2) 	\exp(t_1|Y^l_{ki} Y^l_{kj}-\sigma^0_{ij}|) \right)\\
& -2\E(Y^l_{ki})\E(Y^l_{kj})\sigma^0_{ij}\E\left(\exp(t_1|Y^l_{ki} Y^l_{kj}-\sigma^0_{ij}|) \right)\\
& =\E\left((Y^l_{ki} Y^l_{kj})^2\exp(t_1|Y^l_{ki} Y^l_{kj}-\sigma^0_{ij}|)\right) - (\sigma^0_{ij})^2 \E\left(\exp(t_1|Y^l_{ki} Y^l_{kj}-\sigma^0_{ij}|)\right)\\
& -2\E(Y^l_{ki})\E(Y^l_{kj})\sigma^0_{ij}\E\left(\exp(t_1|Y^l_{ki} Y^l_{kj}-\sigma^0_{ij}|) \right)\\
& \leq \E\left((Y^l_{ki} Y^l_{kj})^2\exp(t_1|Y^l_{ki} Y^l_{kj}-\sigma^0_{ij}|)\right) + \left((\sigma^0_{ij})^2 + \frac{2K_3 -1}{t_1}\right)\E\left(\exp(t_1|Y^l_{ki} Y^l_{kj}-\sigma^0_{ij}|) \right)\\
&\leq \E\left((Y^l_{ki} Y^l_{kj})^2\exp(t_1|Y^l_{ki} Y^l_{kj}-\sigma^0_{ij}|)\right) + \left(\sigma^0_{ii}\sigma^0_{jj} + \frac{2K_3 -1}{t_1}\right)\E\left(\exp(t_1|Y^l_{ki} Y^l_{kj}-\sigma^0_{ij}|) \right)\\
&\leq 2\underbrace{\E\left((Y^l_{ki} Y^l_{kj})^2\exp((1/2)t_1|Y^l_{ki} Y^l_{kj}-\sigma^0_{ij}|)\right)}_{Q_2} + 2\left(\sigma^2_{\max} + \frac{2K_3 -1}{t_1}\right)\underbrace{\E\left(\exp((1/2)t_1|Y^l_{ki} Y^l_{kj}-\sigma^0_{ij}|) \right)}_{Q_3}.\\
\end{aligned}
\end{equation}
where the last inequality uses the Cauchy inequality. Below, we discuss the terms \(Q_2\) and \(Q_3\) separately. For \(Q_2\), we have:
\begin{equation} \label{exp4}
\begin{aligned}
Q_2 &= \E\left((Y^l_{ki} Y^l_{kj})^2\exp((1/2)t_1|Y^l_{ki} Y^l_{kj}-\sigma^0_{ij}|)\right) \\
& \leq \exp((1/2)t_1\sigma^0_{ij})\E\left((Y^l_{ki} Y^l_{kj})^2\exp((1/2)t_1|Y^l_{ki} Y^l_{kj}|)\right)\\
& \leq \exp((1/2)t_1\sigma^0_{ij})\E\left((Y^l_{ki} Y^l_{kj})^2\exp((1/4)t_1((Y^l_{ki})^2 +(Y^l_{kj})^2))\right)\\
& \leq \exp((1/2)t_1\sigma^0_{ij}) \left(
\E\left((Y^l_{ki})^4\exp((1/2)t_1(Y^l_{ki})^2)\right)
\right)^{1/2} \left(
\E\left((Y^l_{kj})^4\exp((1/2)t_1(Y^l_{kj})^2)\right)
\right)^{1/2}\\
& \leq \exp((1/2)t_1\sigma^0_{ij}) 4(t_1)^{-2} \left(
\E\left(\exp((1/2)t_1(Y^l_{ki})^2)\right)
\right)^{1/2} \left(
\E\left(\exp((1/2)t_1(Y^l_{kj})^2)\right)
\right)^{1/2}\\
& \leq \exp((1/2)t_1\sigma^0_{ij}) 4(t_1)^{-2} K_2,
\end{aligned}
\end{equation}
where the third inequality is derived using the Cauchy inequality, and the fourth one is obtained by \(x^2\leq \exp(|x|)\). For \(Q_3\), we have
\begin{equation} \label{exp5}
\begin{aligned}
Q_3  = \E\left(\exp((1/2)t_1|Y^l_{ki} Y^l_{kj}-\sigma^0_{ij}|) \right) &\leq \exp((1/2)t_1\sigma^0_{ij})\E\left(\exp((1/4)t_1((Y^l_{ki})^2 +(Y^l_{kj})^2))\right)\\
& \leq \exp((1/2)t_1\sigma^0_{ij}) K_2,
\end{aligned}
\end{equation}
where the second inequality utilizes \(1/4)t_1\leq \eta\). By integrating (\ref{exp3}), (\ref{exp4}), and (\ref{exp5}), we obtain:
\begin{equation} \label{exp6}
\begin{aligned}
Q_1 & \leq  2\exp((1/2)t_1\sigma^0_{ij}) 4(t_1)^{-2} K_2 + 2\left(\sigma^2_{\max} + \frac{2K_3 -1}{t_1}\right) \exp((1/2)t_1\sigma^0_{ij}) K_2\\
&\leq  2\exp((1/2)t_1\sigma_{\max}) 4(t_1)^{-2} K_2 + 2\left(\sigma^2_{\max} + \frac{2K_3 -1}{t_1}\right) \exp((1/2)t_1\sigma_{\max}) K_2\\
& = 2\exp((1/2)t_1\sigma_{\max}) K_2\left(4(t_1)^{-2} + \sigma^2_{\max} + \frac{2K_3 -1}{t_1}\right),
\end{aligned}
\end{equation}
furthermore, by integrating (\ref{exp2}), we have
\begin{equation} \label{exp7}
\begin{aligned}
\PP&\left(\sum_{k=1}^{n} Y^l_{ki} Y^l_{kj}-\sigma^0_{ij}  > n\tau_1 \right) < \exp(-t_1 n\tau_1) \exp\left(\frac{K_3t^2_1}{2} \sum_{k=1}^{n} \E \left((Y^l_{ki} Y^l_{kj}-\sigma^0_{ij})^2\exp\left(t_1|Y^l_{ki} Y^l_{kj}-\sigma^0_{ij}|\right)\right) \right)\\
& \leq \exp(-t_1 n\tau_1) nK_3K_2t^2_1\exp((1/2)t_1\sigma_{\max})\left(4(t_1)^{-2} + \sigma^2_{\max} + \frac{2K_3 -1}{t_1}\right)\\
& = \exp(-t_1 n\tau_1) nK_3K_2\exp((1/2)t_1\sigma_{\max})\left(4+ \sigma^2_{\max}t^2_1 + (2K_3 -1)t_1\right)\\
\end{aligned}
\end{equation}

Let \(t_1 = \frac{1}{2} \eta (\frac{\log p}{n})^{1/2}\). Based on the assumption, where \(n \geq \log p\), we derive \(t^2_1\leq \eta\). Furthermore, let \(\tau_1 = c_1(\frac{\log p}{n})^{1/2}\) and \(c_1 = \frac{2}{n\eta} \log^{Q_4}_p\),
\[
Q_4 = \exp \left(\frac{\sigma_{\max}}{2} \frac{1}{2} \eta (\frac{\log p}{n})^{1/2}\right) nK_3K_2 \left(4+ \sigma^2_{\max} \frac{\eta^2}{4} \frac{\log p}{n} + (2K_3 -1)\frac{1}{2} \eta (\frac{\log p}{n})^{1/2}\right).
\]

Consequently, we have
\begin{equation} \label{exp8}
\begin{aligned}
\PP&\left(\sum_{k=1}^{n} Y^l_{ki} Y^l_{kj}-\sigma^0_{ij}  > n\tau_1 \right) \leq \exp(-t_1 n\tau_1) nK_3K_2\exp((1/2)t_1\sigma_{\max})\left(4+ \sigma^2_{\max}t^2_1 + (2K_3 -1)t_1\right)\\
& = \exp \left(-\frac{1}{2} \eta (\frac{\log p}{n})^{1/2} n c_1(\frac{\log p}{n})^{1/2}\right) \exp \left(\frac{\sigma_{\max}}{2} \frac{1}{2} \eta (\frac{\log p}{n})^{1/2}\right) nK_3K_2 \\
& \times \left(4+ \sigma^2_{\max} \frac{\eta^2}{4} \frac{\log p}{n} + (2K_3 -1)\frac{1}{2} \eta (\frac{\log p}{n})^{1/2}\right)\\
& =\exp\left(-\frac{n\eta c_1}{2}\log p\right) \underbrace{\exp \left(\frac{\sigma_{\max}}{2} \frac{1}{2} \eta (\frac{\log p}{n})^{1/2}\right) nK_3K_2 \left(4+ \sigma^2_{\max} \frac{\eta^2}{4} \frac{\log p}{n} + (2K_3 -1)\frac{1}{2} \eta (\frac{\log p}{n})^{1/2}\right)}_{Q_4}\\
& = p^{-M-2}.
\end{aligned}
\end{equation}

Since the inequality in the above derivation holds for both the upper and lower bounds of the interval, we can also conclude that
\begin{equation}\label{exp9}
\begin{aligned}
\PP\left(\sum_{k=1}^{n} Y^u_{kj} > n \tau_0\right)   \leq p^{-M-1},\ \PP\left(\sum_{k=1}^{n} Y^l_{ki} Y^l_{kj}-\sigma^0_{ij}  > n\tau_1 \right) \leq p^{-M-2},
\end{aligned}
\end{equation}

By combining (\ref{allstart}),(\ref{exp11}), (\ref{exp8}), and (\ref{exp9}), we can conclude that
\[
\PP\left(\max_{i,j} (|s^l_{ij} - \sigma^0_{ij}| \vee |s^u_{ij} - \sigma^0_{ij}|) > \lambda\right) \leq 6p^{-M},
\]
which completes the proof.
\end{proof}

\subsection{Proof of Theorem \ref{nasymptoticth3}}

\begin{proof}
Let \(\lambda = \tau^2_2 + \tau_3\), and the specific expressions for \(\tau_2\) and \(\tau_3\) will be provided later. Following \citet{xue2012positive}, We first have
\begin{equation} \label{allstart1}
\begin{aligned}
\PP&\left(\max_{i,j} (|s^l_{ij} - \sigma^0_{ij}| \vee |s^u_{ij} - \sigma^0_{ij}|) > \lambda\right)  \leq \PP\left(\max_{i,j} |s^l_{ij} - \sigma^0_{ij}| > \lambda\right) + \PP\left(\max_{i,j} |s^u_{ij} - \sigma^0_{ij}| > \lambda\right)\\
& \leq \PP\left(\max_{i,j} \bigg|\sum_{k=1}^{n}  Y^l_{ki} Y^l_{kj} -\sigma^0_{ij} \bigg| > n \tau_3\right) + \PP\left(\max_{j}  \bigg|\sum_{k=1}^{n}  Y^l_{kj}\bigg| > n \tau_2\right)\\ 
& + \PP\left(\max_{i,j} \bigg|\sum_{k=1}^{n}  Y^u_{ki} Y^u_{kj} -\sigma^0_{ij} \bigg| > n \tau_3\right) + \PP\left(\max_{j}  \bigg|\sum_{k=1}^{n}  Y^u_{kj}\bigg| > n \tau_2\right).\\
\end{aligned}
\end{equation}

Below, we discuss the four terms in (\ref{allstart1}) separately. Specifically, we first define 
\[
c_2 = 8(K_4 + 1)(M + 1), \ \tau_2 = c_2 \left( \frac{\log p}{n} \right)^{1/2},\  \delta_n = n^{1/4}(\log n)^{-1/2}.
\]

Additionally, we set
\[
X^l_{ij} = Y^l_{ij} \mathbb{I}\{|Y^l_{ij}| \leq \delta_n\}, Z^l_{ij} = Y^l_{ij} \mathbb{I}\{|Y^l_{ij}| > \delta_n\},
\]
and we have \(Y^l_{ij} = X^l_{ij} + Z^l_{ij}\).  Through this construction, we have that \(|X^l_{ij}| \leq  \delta_n\) is a bounded random variable; for \(Z^l_{ij}\), we have 
\begin{equation}\label{part111}
\begin{aligned}
|\E(Z^l_{ij})| & = |\E(Y^l_{ij} \mathbb{I}\{|Y^l_{ij}| > \delta_n\})| = |\E((Y^l_{ij})^4 \mathbb{I}\{|Y^l_{ij}| > \delta_n\} / (Y^l_{ij})^3)|\\
& \leq \delta^{-3}_n |\E((Y^l_{ij})^4 \mathbb{I}\{|Y^l_{ij}| > \delta_n\})| \leq \delta^{-3}_n K_4  =o(\tau_2).
\end{aligned}
\end{equation}

We first have the following inequality: 
\begin{equation}\label{lowervar}
\begin{aligned}
\Var(X^l_{ij}) & \leq \E((Y^l_{ij})^2) \leq \E((Y^l_{ij})^2\mathbb{I}\{|Y^l_{ij}| \geq 1\}) + \E((Y^l_{ij})^2\mathbb{I}\{|Y^l_{ij}| \leq 1\})\\
& \leq K_4 + 1.
\end{aligned}
\end{equation}

Then, utilizing the Bernstein inequality \citep{CMR-37-1}, we have
\begin{equation} \label{part112}
\begin{aligned}
\PP\left(
\sum_{i-1}^{n} \left(X^l_{ij} - \E(X^l_{ij})\right) \geq \frac{1}{2}n\tau_2 
\right) &\leq \exp\left(
\frac{-n\tau^2_2}{8\Var(X^l_{ij}) + 4\delta_n\tau_2/3}
\right) \\
& \leq \exp\left(
\frac{-c_2\log p}{8(K_4+1) + O(n^{-1/4})}
\right)\\
&=O(p^{-M-1}),
\end{aligned}
\end{equation}
where the second inequality utilizes the inequality (\ref{lowervar}). Using the Markov inequality \citep{CMR-37-1}, we then have
\begin{equation}\label{part113}
\PP(|Y^l_{ij}| > \delta_n) \leq \delta^{-4(1+\gamma+\tau)}_n \E\left(|Y^l_{ij}|^{4(1+\gamma+\tau)}\right) \leq K_4(\log n)^{2(1+\gamma+\tau)}n^{-(1+\gamma+\tau)}.	
\end{equation}

Combining (\ref{part111}), (\ref{lowervar}), (\ref{part112}) and (\ref{part113}), we immediately obtain
\begin{equation}\label{part114}
\begin{aligned}
\PP\left(
\sum_{i=1}^{n}Y^l_{ij}> n\tau_2
\right) &= \PP\left(
\sum_{i=1}^{n} \left(X^l_{ij} + Z^l_{ij} - \E(X^l_{ij} + Z^l_{ij})\right) > n\tau_2 
\right)\\
& \leq \PP\left(
\sum_{i=1}^{n} \left(X^l_{ij} - \E(X^l_{ij})\right) > \frac{1}{2} n\tau_2
\right) + \PP\left(
\sum_{i=1}^{n} \left(Z^l_{ij} - \E(Z^l_{ij})\right) > \frac{1}{2} n\tau_2
\right)\\
& \leq O(p^{-M-1}) + \PP\left(
\sum_{i=1}^{n} \left(Z^l_{ij} - o(\tau_2)\right) > \frac{1}{2} n\tau_2
\right) \\
& \leq O(p^{-M-1}) + \sum_{i=1}^{n} \PP(|Y^l_{ij}| > \delta_n)\\
& \leq O(p^{-M-1}) + K_4(\log n)^{2(1+\gamma+\tau)}n^{-(1+\gamma+\tau)}.
\end{aligned}
\end{equation}

Let \(\tau_3 = c_3 \left(\frac{\log p}{n}\right)^{1/2}\) with \(c_3 = 8(K_4+1)(M+2)\). Define a new random variable
\[
R^l_{ijk} = Y^l_{ij} Y^l_{ik}\mathbb{I}\{|Y^l_{ij}|>\delta_n \text{ or 	}|Y^l_{ik}|>\delta_n \}
\]
and we immediately have
\[
Y^l_{ij} Y^l_{ik} = X^l_{ij} X^l_{ik} + R^l_{ijk},\ \E(Y^l_{ij} Y^l_{ik}) = \E(X^l_{ij} X^l_{ik}) + \E(R^l_{ijk}).
\]

Based on the definition of the variables \(X^l_{ij}\) and \(X^l_{ik}\), we have that \(X^l_{ij} X^l_{ik}\) is also bounded, i.e.,\(|X^l_{ij} X^l_{ik}| \leq \delta^2_n\). Additionally,
\[
\begin{aligned}
|\E(R^l_{ijk})| & \leq |\E(Y^l_{ij} Y^l_{ik}\mathbb{I}\{|Y^l_{ij}|>\delta_n\})| + |\E(Y^l_{ij} Y^l_{ik}\mathbb{I}\{|Y^l_{ik}|>\delta_n\})|\\
& \leq \delta^{-(2+4\gamma)}_n \E((Y^l_{ij})^{4(1+\gamma)} \mathbb{I}\{|Y^l_{ij}|>\delta_n\})\E((Y^l_{ik})^2) \\
&+\delta^{-(2+4\gamma)}_n \E((Y^l_{ik})^{4(1+\gamma)} \mathbb{I}\{|Y^l_{ik}|>\delta_n\})\E((Y^l_{ij})^2) \\
& \leq 2K_4 \delta^{-(2+4\gamma)}_n.
\end{aligned}
\]

Similarly, using the Bernstein inequality, we have
\begin{equation}\label{part221}
\begin{aligned}
\PP\left(
\sum_{i=1}^{n} \left(X^l_{ij} X^l_{ik} - \E(X^l_{ij} X^l_{ik})\right) \geq \frac{1}{2}n\tau_3
\right) &\leq \exp\left(
\frac{-n \tau^2_3}{8(K_4+1) + 4\delta^2_n\tau_3/3}
\right) \\
& \leq \exp\left(
\frac{-c_3 \log p}{8(K_4+1) + O((\log n)^{-1/2})}
\right)\\
& = O(P^{-M-2}),
\end{aligned}
\end{equation}
where in the second inequality, we utilized the fact
\[
\Var(X^l_{ij} X^l_{ik}) \leq \E((Y^l_{ij})^2 (Y^l_{ik})^2)\leq (\E((Y^l_{ij})^4 (Y^l_{ik})^4))^{1/2} \leq K_4 +1.
\]

Utilizing (\ref{part221}), we have
\begin{equation}\label{part222}
\begin{aligned}
\PP\left(\max_{i,j} \bigg|\sum_{k=1}^{n}  Y^l_{ki} Y^l_{kj} -\sigma^0_{ij} \bigg| > n \tau_3\right) & \leq \PP\left(\max_{i,j} \bigg|\sum_{k=1}^{n}  X^l_{ki} X^l_{kj} - \E(X^l_{ki} X^l_{kj}) \bigg| > \frac{1}{2}n \tau_3\right) \\
& + \PP\left(\max_{i,j} \bigg|\sum_{k=1}^{n} R_{ijk} - \E(R_{ijk}) \bigg| > \frac{1}{2}n \tau_3\right)\\
& \leq 2 \sum_{i,j} \PP\left(\sum_{k=1}^{n} X^l_{ki} X^l_{kj} - \E(X^l_{ki} X^l_{kj}) > \frac{1}{2}n \tau_3\right)\\
& + \PP\left(\max_{i,j} \bigg|\sum_{k=1}^{n} R_{ijk} - o(\tau_3) \bigg| > \frac{1}{2}n \tau_3\right)\\ 
& \leq O(p^{-M}) + \sum_{i,j} \PP(|Y^l_{ij}| > \delta_n)\\
& \leq O(p^{-M}) + K_4p(\log n)^{2(1+\gamma+\tau)}n^{-(\gamma+\tau)}.
\end{aligned}
\end{equation}

Combining (\ref{allstart1}), (\ref{part114}), and (\ref{part222}), we have
\begin{equation} \label{finpart1}
\begin{aligned}
\PP\left(\max_{i,j} |s^l_{ij} - \sigma^0_{ij}| > \lambda\right) & \leq \PP\left(\max_{i,j} \bigg|\sum_{k=1}^{n}  Y^l_{ki} Y^l_{kj} -\sigma^0_{ij} \bigg| > n \tau_3\right) + \PP\left(\max_{j}  \bigg|\sum_{k=1}^{n}  Y^l_{kj}\bigg| > n \tau_2\right)\\
& \leq O(p^{-M}) + 3K_4p(\log n)^{2(1+\gamma+\tau)}n^{-(\gamma+\tau)}.
\end{aligned}
\end{equation}

Similarly, for the upper bound, we also have
\begin{equation} \label{finpart2}
\begin{aligned}
\PP\left(\max_{i,j} |s^u_{ij} - \sigma^0_{ij}| > \lambda\right) & \leq \PP\left(\max_{i,j} \bigg|\sum_{k=1}^{n}  Y^u_{ki} Y^u_{kj} -\sigma^0_{ij} \bigg| > n \tau_3\right) + \PP\left(\max_{j}  \bigg|\sum_{k=1}^{n}  Y^u_{kj}\bigg| > n \tau_2\right)\\
& \leq O(p^{-M}) + 3K_4p(\log n)^{2(1+\gamma+\tau)}n^{-(\gamma+\tau)}.
\end{aligned}
\end{equation}

Combining (\ref{allstart1}), (\ref{finpart1}), and (\ref{finpart2}), we have
\[
\PP\left(\max_{i,j} (|s^l_{ij} - \sigma^0_{ij}| \vee |s^u_{ij} - \sigma^0_{ij}|) > \lambda\right) \leq O(p^{-M}) +6K_4p(\log n)^{2(1+\gamma+\tau)}n^{-(\gamma+\tau)},
\]
combining Theorem \ref{nasymptoticth1}, the proof is completed.

\end{proof}

\end{appendix}

\newpage
\bibliography{./bib/reference.bib}
\bibliographystyle{elsarticle-num-names}

\end{document}